\newtheorem{theorem}{Theorem}
\newtheorem{lemma}{Lemma}
\newtheorem{remark}{Remark}
\begin{document}

\title{A Probabilistic Method for Nonlinear Robustness Analysis of F-16 Controllers}

\author{Abhishek~Halder$^{*}$,
        Kooktae~Lee,
        and~Raktim~Bhattacharya
\thanks{$*$ Corresponding author. This research was supported by NSF grant CSR-1016299 with Dr. Helen Gill as the program manager. Some preliminary results were presented \cite{HalderLeeBhattacharya2013} in American Control Conference, 2013.}
\thanks{The authors are with the Department
of Aerospace Engineering, Texas A\&M University, College Station,
TX 77843-3141, USA. e-mail: \texttt{\{ahalder,animodor,raktim\}@tamu.edu}}
}

\maketitle

\begin{abstract}
This paper presents a new framework for controller robustness verification with respect to F-16 aircraft's closed-loop performance in longitudinal flight. We compare the state regulation performance of a linear quadratic regulator (LQR) and a gain-scheduled linear quadratic regulator (gsLQR), applied to nonlinear open-loop dynamics of F-16, in presence of stochastic initial condition and parametric uncertainties, as well as actuator disturbance. We show that, in presence of initial condition uncertainties alone, both LQR and gsLQR have comparable immediate and asymptotic performances, but the gsLQR exhibits better transient performance at intermediate times. This remains true in the presence of additional actuator disturbance. Also, gsLQR is shown to be more robust than LQR, against parametric uncertainties. The probabilistic framework proposed here, leverages transfer operator based density computation in exact arithmetic and introduces optimal transport theoretic performance validation and verification (V\&V) for nonlinear dynamical systems. Numerical results from our proposed method, are in unison with Monte Carlo simulations.
\end{abstract}

\begin{IEEEkeywords}
Probabilistic robustness, uncertainty propagation, transfer operator, optimal transport.
\end{IEEEkeywords}

\section{Introduction}

\IEEEPARstart{I}{n} recent times, the notion of probabilistic robustness \cite{barmish1996probabilistic,calafiore2000randomized,polyak2001probabilistic,wang2002robust,fujisaki2003probabilistic,tempo2005randomized,
chen2005risk}, has emerged as an attractive alternative to classical worst-case robust control framework. There are two key driving factors behind this development. \textbf{First}, it is well-known \cite{chen2005risk} that the deterministic modeling of uncertainty in the worst-case framework leads to conservative performance guarantees. In particular, from a probabilistic viewpoint, classical robustness margins can be expanded significantly while keeping the risk level acceptably small \cite{barmish1997,lagoa1997probabilistic,lagoa2003probabilistic}. \textbf{Second}, the classical robustness formulation often leads to problems with enormous computational complexity \cite{khargonekar1996randomized,tempo1996probabilistic,vidyasagar2001probabilistic}, and in practice, relies on relaxation techniques for solution.

Probabilistic robustness formulation offers a promising alternative to address these challenges. Instead of the interval-valued structured uncertainty descriptions, it adopts a risk-aware perspective to analyze robustness, and hence, explicitly accounts the distributional information associated with unstructured uncertainty. Furthermore, significant progress have been made in the design and analysis of randomized algorithms \cite{tempo2005randomized,calafiore2011research} for computations related to probabilistic robustness. These recent developments are providing impetus to a transition from ``worst-case" to ``distributional robustness" \cite{lagoa2002distributionally,nagy2003worst}.

\subsection{Computational challenges in distributional robustness}

In order to fully leverage the potential of distributional robustness, the associated computation must be scalable and of high accuracy. However, numerical implementation of most probabilistic methods rely on Monte Carlo like realization-based algorithms, leading to high computational cost for implementing them to nonlinear systems. In particular, the accuracy of robustness computation depends on the numerical accuracy of histogram-based (piecewise constant) approximation of the probability density function (PDF) that evolves spatio-temporally over the joint state and parameter space, under the action of closed-loop nonlinear dynamics. Nonlinearities at trajectory level cause non-Gaussianity at PDF level, even when the initial uncertainty is Gaussian. Thus, in Monte Carlo approach, at any given time, a high-dimensional nonlinear system requires a dense grid to sufficiently resolve the non-Gaussian PDF, incurring the `curse of dimensionality' \cite{Bellman1957}.

This is a serious bottleneck in applications like flight control software certification \cite{chakraborty2011}, where the closed loop dynamics is nonlinear, and linear robustness analysis supported with Monte Carlo, remains the state-of-the art. Lack of nonlinear robustness analysis tools, coupled with the increasing complexity of flight control algorithms, have caused loss of several F/A-18 aircrafts due to nonlinear ``falling leaf mode" \cite{seiler2012assessment}, that went undetectable \cite{chakraborty2011susceptibility} by linear robustness analysis algorithms. On the other hand, accuracy of sum-of-squares optimization-based \emph{deterministic} nonlinear robustness analysis \cite{chakraborty2011,seiler2012assessment} depends on the quality of semi-algebraic approximation, and is still computationally expensive for large-scale nonlinear systems. Thus, there is a need for controller robustness verification methods, that does not make any structural assumption on nonlinearity, and allows scalable computation while accommodating stochastic uncertainty.

\subsection{Contributions of this paper}

\subsubsection{PDF computation in exact arithmetic}
Building on our earlier work \cite{HalderBhattacharya2010GNC,HalderBhattacharya2011JGCD}, we show that stochastic initial condition and parametric uncertainties can be propagated through the closed-loop nonlinear dynamics \emph{in exact arithmetic}.
This is achieved by leveraging the fact that the transfer operator governing the evolution of joint densities, is an infinite-dimensional \emph{linear} operator, even though the underlying finite-dimensional closed-loop dynamics is \emph{nonlinear}. Hence, we directly solve the linear transfer operator equation subject to the nonlinear dynamics.

This crucial step distinguishes the present work from other methods for probabilistic robustness computation by explicitly using the exact values of the joint PDF instead of empirical estimates of it. Thus, from a statistical perspective, the robustness verification method proposed in this paper, is an \emph{ensemble formulation} as opposed to the \emph{sample formulations} available in the literature \cite{tempo1996probabilistic,tempo2005randomized}.

\subsubsection{Probabilistic robustness as optimal transport distance on information space}
Based on Monge-Kantorovich optimal transport \cite{Villani2003,Villani2008}, we propose a novel framework for computing probabilistic robustness as the ``distance" on information space. In this formulation, we measure robustness as the minimum effort required to transport the probability mass from instantaneous joint state PDF to a reference state PDF. For comparing regulation performance of controllers with stochastic initial conditions, the reference state PDF is Dirac distribution at trim. If in addition, parametric uncertainties are present, then the optimal transport takes place on the extended state space with the reference PDF being a degenerate distribution at trim value of states. We show that the optimal transport computation is meshless, non-parametric and computationally efficient. We demonstrate that the proposed framework provides an intuitive understanding of probabilistic robustness while performing exact ensemble level computation.

\subsection{Structure of this paper}
Rest of this paper is structured as follows. In Section II, we describe the nonlinear open-loop dynamics of F-16 aircraft in longitudinal flight. Section III provides the synthesis of linear quadratic regulator (LQR) and gain-scheduled linear quadratic regulator (gsLQR) -- the two controllers whose state regulation performances are being compared. The proposed framework is detailed in Section IV and consists of closed-loop uncertainty propagation and optimal transport to trim. Numerical results illustrating the proposed method, are presented in Section V. Section VI concludes the paper.

\subsection{Notations}
The symbol $\nabla$ stands for the (spatial) gradient operator, and diag(.) denotes a diagonal matrix. Abbreviations ODE and PDE refer to ordinary and partial differential equation, respectively. The notation $\mathcal{U}\left(\cdot\right)$ denotes uniform distribution, and $\delta\left(x\right)$ stands for the Dirac delta distribution. Further, $\text{dim}\left(S\right)$ denotes the dimension of the space in which set $S$ belongs to, and $\text{supp}\left(\cdot\right)$ denotes the support of a function.


\section{F-16 Flight Dynamics}

\subsection{Longitudinal Equations of Motion}
The longitudinal equations of motion for F-16 considered here, follows the model given in \cite{nguyen1979simulator,StevensLewis1992,bhattacharya2001nonlinear}, with the exception that we restrict the maneuver to a constant altitude ($h = 10,000$ ft) flight. Further, the north position state equation is dropped since no other longitudinal states depend on it. This results a reduced \emph{four state, two input model} with $x = \left(\theta, V, \alpha, q\right)^{\top}$, $u = \left(T, \delta_{e}\right)^{\top}$, given by
\small
\begin{subequations}
 \begin{eqnarray}
 \dot{\theta} &=& q, \label{ThetaDot}\\
 \dot{V} &=& \frac{1}{m}\cos\alpha\left[T - mg\sin\theta + \overline{q}S \left(C_{X} + \displaystyle\frac{\overline{c}}{2V} C_{X_{q}} q\right)\right] + \frac{1}{m}\sin\alpha\left[mg\cos\theta + \overline{q}S \left(C_{Z} + \displaystyle\frac{\overline{c}}{2V} C_{Z_{q}} q\right)\right], \label{VDot}\\
 \dot{\alpha} &=& q - \displaystyle\frac{\sin\alpha}{m V} \left[T - mg\sin\theta + \overline{q}S \left(C_{X} + \displaystyle\frac{\overline{c}}{2V} C_{X_{q}} q\right)\right] + \displaystyle\frac{\cos\alpha}{m V} \left[mg\cos\theta + \overline{q}S \left(C_{Z} + \displaystyle\frac{\overline{c}}{2V} C_{Z_{q}} q\right)\right], \label{AlphaDot}\\
 \dot{q} &=& \displaystyle\frac{\overline{q}S\overline{c}}{J_{yy}} \left[C_{m} + \displaystyle\frac{\overline{c}}{2V}C_{m_{q}} q + \displaystyle\frac{\left(x^{\text{ref}}_{\text{cg}} - x_{\text{cg}}\right)}{\overline{c}}\left(C_{Z} + \displaystyle\frac{\overline{c}}{2V}C_{Z_{q}} q\right)\right]. \label{QDot}
 \end{eqnarray}
\label{F16FourStateLongitudinal}
\end{subequations}
\normalsize
\noindent
The state variables are second Euler angle $\theta$ (deg), total velocity $V$ (ft/s), angle-of-attack $\alpha$ (deg), and pitch rate $q$ (deg/s), respectively. The control variables are thrust $T$ (lb), and elevator deflection angle $\delta_{e}$ (deg). Table \ref{ParamF16} lists the parameters involved in (\ref{F16FourStateLongitudinal}). Furthermore, the dynamic pressure $\overline{q} = \frac{1}{2}\rho\left(h\right) V^{2}$, where the atmospheric density $\rho\left(h\right) = \rho_{0}\left(1 - 0.703\times10^{-5}h\right)^{4.14} = 1.8 \times 10^{-3}$ slugs/ft\textsuperscript{3} remains fixed.
\begin{table}
 \begin{center}
\caption{Parameters in eqn. (\ref{F16FourStateLongitudinal})}
\label{ParamF16}
  \begin{tabular}{|l|l|}\hline
\textbf{Description of parameters} & \textbf{Values with dimensions} \\ \hline\hline
Mass of the aircraft & $m = 636.94$ slugs \\
Acceleration due to gravity & $g = 32.17$ ft/s\textsuperscript{2}\\
Wing planform area & $S = 300$ ft\textsuperscript{2}\\
Mean aerodynamic chord & $\overline{c} = 11.32$ ft\\
Reference $x$-position of c.g. & $x_{\text{cg}}^{\text{ref}} = 0.35\:\overline{c}$ ft\\
True $x$-position of c.g. & $x_{\text{cg}} = 0.30\:\overline{c}$ ft\\
Pitch moment-of-inertia & $J_{yy} = 55,814$ slug-ft\textsuperscript{2}\\
Nominal atmospheric density & $\rho_{0} = 2.377\times10^{-3}$ slugs/ft\textsuperscript{3}\\\hline
 \end{tabular}
 \end{center}
\end{table}

\subsection{Aerodynamic Coefficients}

The aerodynamic force and moment coefficients $C_{X}, C_{Z}$, and $C_{m}$ are functions of $\alpha$
and $\delta_{e}$, expressed as look-up table from wind tunnel test data \cite{nguyen1979simulator,StevensLewis1992,bhattacharya2001nonlinear}. Similarly, the stability derivatives
$C_{X_{q}}, C_{Z_{q}}$, and $C_{m_{q}}$ are look-up table functions of $\alpha$. We refer the readers to above references for details.


\section{F-16 Flight Control Laws}
In this paper, we consider two controllers: LQR and gsLQR, as shown in Fig. \ref{ClosedLoopBlockDiagrams}, with the common objective of regulating the state to its trim value. Both controllers minimize the infinite-horizon cost functional
\begin{eqnarray}
\mathcal{J} = \displaystyle\int_{0}^{\infty} \left(x(t)^{\top} Q \:x(t) + u(t)^{\top} R \:u(t)\right) \: dt,
\label{QuadraticCost}
\end{eqnarray}
with $Q = \text{diag}\left(100, 0.25, 100, 10^{-4}\right)$, and $R = \text{diag}\left(10^{-6}, 625\right)$. The control saturation shown in the block diagrams, is modeled as
\begin{eqnarray}
1000\,\text{lb} \leqslant T \leqslant 28,000\,\text{lb}, \quad -25^{\circ} \leqslant \delta_{e} \leqslant + 25^{\circ}.
\label{ControlBounds}
\end{eqnarray}

\begin{figure}[tb]
\hfill
\subfigure[Block diagram for LQR closed-loop system.]{\includegraphics[width=0.49\textwidth]{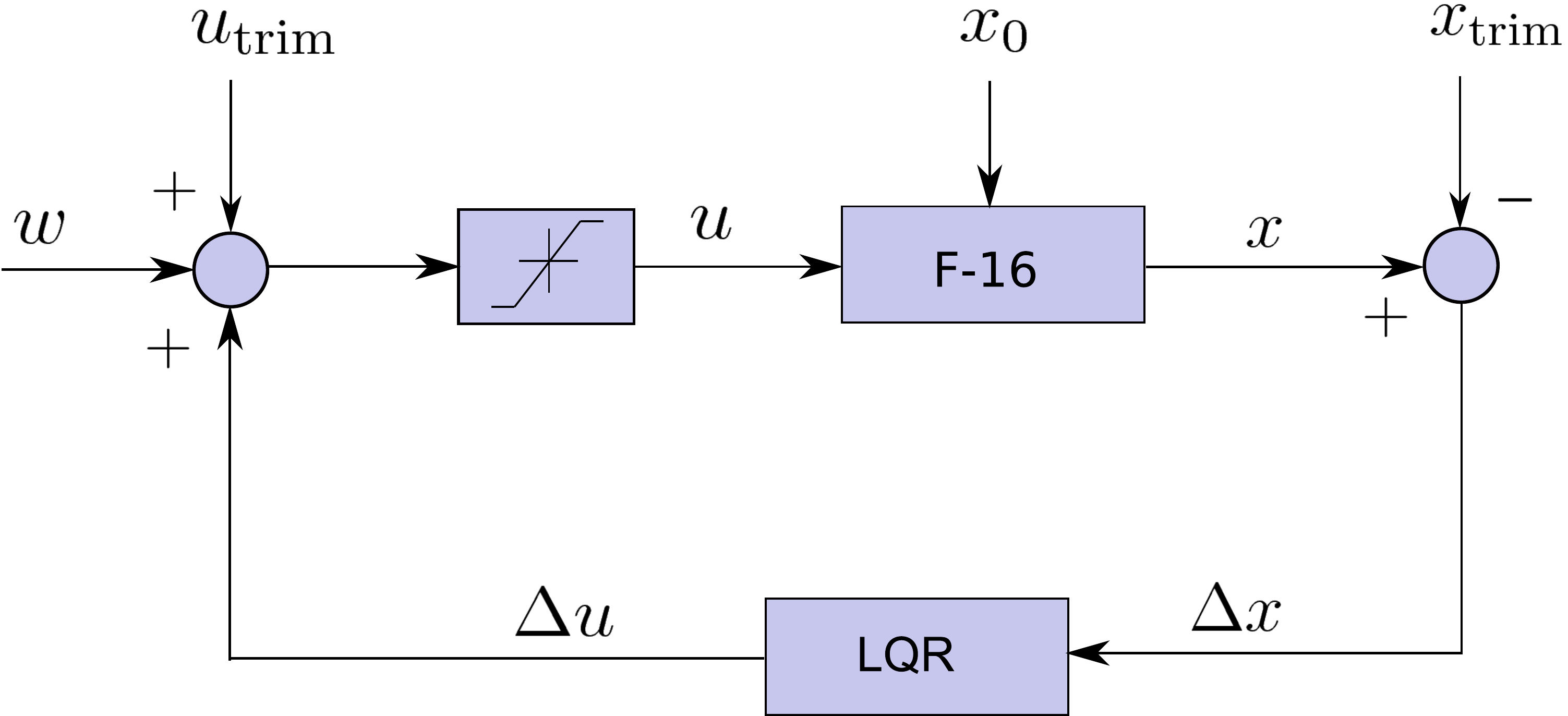}}
\hfill
\subfigure[Block diagram for gsLQR closed-loop system.]{\includegraphics[width=0.49\textwidth]{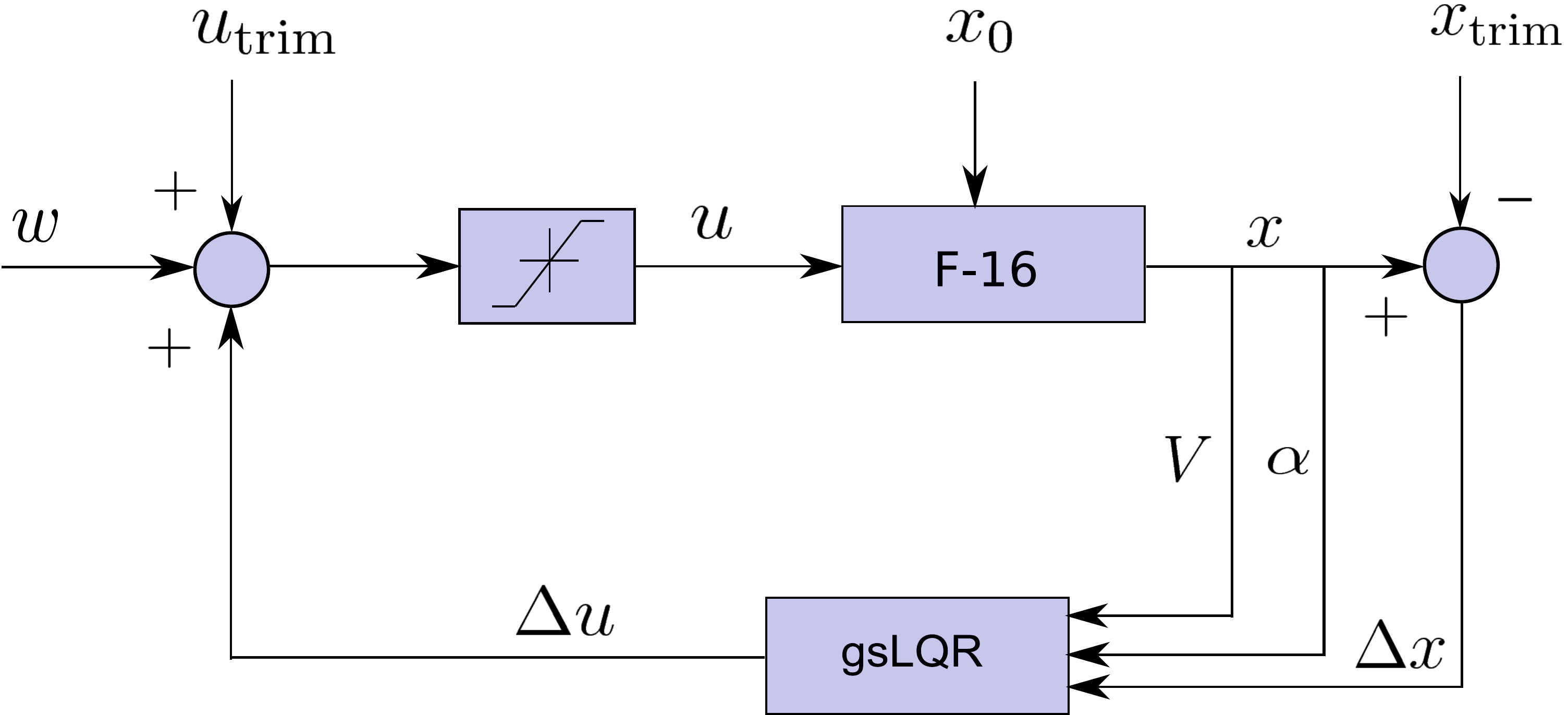}}
\hfill
\caption{Block diagrams for the closed-loop nonlinear systems with (a) LQR and (b) gsLQR controller. Here, $w$ denotes the actuator disturbance.}
\label{ClosedLoopBlockDiagrams}
\end{figure}

\subsection{LQR Synthesis}
The nonlinear open loop plant model was linearized about $x_{\text{trim}}, u_{\text{trim}}$, using simulink \texttt{linmod} command. The trim conditions were computed via the nonlinear optimization package SNOPT \cite{SNOPTversion7}, and are given by $x_{\text{trim}} = \left(2.8190\,\text{deg}, 407.8942 \,\text{ft/s}, 6.1650\,\text{deg}, 6.8463\times10^{-4}\,\text{deg/s}\right)^{\top}$, $u_{\text{trim}} = \left(1000\,\text{lb}, \: -2.9737\,\text{deg}\right)^{\top}$. The LQR gain matrix $K$, computed for this linearized model, was found to be
\begin{eqnarray}
K = \begin{bmatrix} 7144.9 & -400.58 & -1355.8 & 2002.8\\
0.7419 & -0.0113 & -0.2053 & 0.3221
\end{bmatrix}.
\label{LQRgain}
\end{eqnarray}
As observed in Fig. \ref{EigenLQR} (a), both open-loop and LQR closed-loop linear systems are stable.
\begin{figure}
\begin{center}
\includegraphics[scale=0.35]{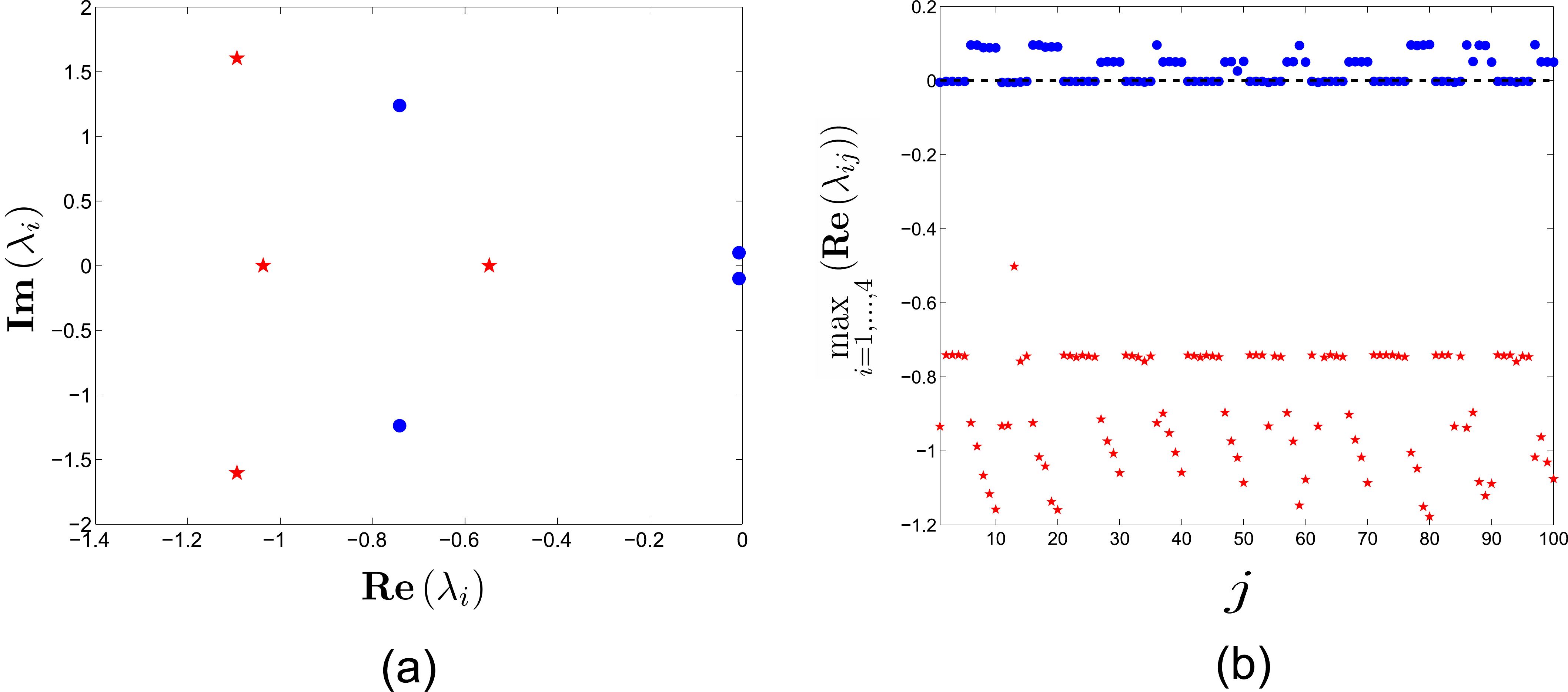}
\centering
\caption{(a) The open-loop (\emph{circles}) and LQR closed-loop (\emph{stars}) eigenvalues shown in the complex plane, for the linearized model. (b) For gsLQR synthesis, maximum of the real parts of open-loop (\emph{circles}) and closed-loop (\emph{stars}) eigenvalues for each of the $j=1,\hdots,100$ linearizations are plotted. Depending on the trim condition, some open-loop linearized plants can be unstable but all closed-loop synthesis are stable.}
\label{EigenLQR}
\end{center}
\end{figure}

\subsection{Gain-scheduled LQR Synthesis}
As shown in Fig. \ref{ClosedLoopBlockDiagrams} (b), $V$ and $\alpha$ are taken as the scheduling states. We generate 100 grid points in the box
\begin{eqnarray}
100 \, \text{ft/s}\: \leqslant V \leqslant 1000\, \text{ft/s}, \quad -10^{\circ} \leqslant \alpha \leqslant +45^{\circ},
\label{SchedulingGrid}
\end{eqnarray}
and compute trim conditions $\{x_{\text{trim}}^{j}, u_{\text{trim}}^{j}\}_{j=1}^{100}$, using SNOPT, for each of these grid points.
Next, we synthesize a sequence of LQR gains $\{K_{j}\}_{j=1}^{100}$, corresponding to the linearized dynamics about each trim. For the closed-loop nonlinear system, the gain matrices at other state vectors are linearly interpolated over $\{K_{j}\}_{j=1}^{100}$. As shown in Fig. \ref{EigenLQR} (b), depending on the choice of the trim conditions corresponding to the grid-points in scheduling subspace, some open-loop linearized plants are unstable but all closed-loop synthesis are stable.


\section{Probabilistic Robustness Analysis: An Optimal Transport Framework}

\subsection{Closed-loop Uncertainty Propagation}
We assume that the uncertainties in initial conditions ($x_{0}$) and parameters ($p$) are described by the initial joint PDF $\varphi_{0}\left(x_{0}, p\right)$, and this PDF is known for the purpose of performance analysis. For $t>0$, under the action of the closed-loop dynamics, $\varphi_{0}$ evolves over the \emph{extended state space}, defined as the joint space of states and parameters, to yield the instantaneous joint PDF $\varphi\left(x(t), p, t\right)$. Although the closed-loop dynamics governing the state evolution is nonlinear, the {Perron-Frobenius (PF) operator} \cite{LasotaMackey}, governing the joint PDF evolution remains linear. This enables \emph{meshless} computation of $\varphi\left(x(t), p, t\right)$ in \emph{exact arithmetic}, as detailed below.

\subsubsection{Liouville PDE formulation}
The transport equation associated with the PF operator, governing the spatio-temporal evolution of probability mass over the extended state space $\widetilde{x} := \left[x(t), \quad p\right]^{\top}$, is given by the stochastic Liouville equation
\begin{eqnarray}
\displaystyle\frac{\partial \varphi}{\partial t} = -\displaystyle\sum_{i=1}^{n_{x}+n_{p}} \displaystyle\frac{\partial}{\partial \widetilde{x}_{i}}\left(\varphi \: \widetilde{f}_{\text{cl}}\right), \qquad x\left(t\right) \in \mathbb{R}^{n_{x}}, \; p \in \mathbb{R}^{n_{p}},
\label{LiouvillePDE}
\end{eqnarray}
where $\widetilde{f}_{\text{cl}}\left(x\left(t\right), p, t\right)$ denotes the closed-loop extended vector field, i.e.
\begin{eqnarray}
\widetilde{f}_{\text{cl}}\left(x\left(t\right), p, t\right) := \left[\underbrace{f_{\text{cl}}\left(x\left(t\right), p, t\right)}_{n_{x}\times 1}, \quad \underbrace{\mathbf{0}}_{n_{p}\times 1}\right]^{\top}, \qquad \dot{x} = f_{\text{cl}}\left(x(t), p, t\right).
\label{FinalClosedLoopDynamics}
\end{eqnarray}
Since (\ref{LiouvillePDE}) is a first-order PDE, it allows method-of-characteristics (MOC) formulation, which we describe next.

\subsubsection{Characteristic ODE computation}
It can be shown \cite{HalderBhattacharya2011JGCD} that the characteristic curves for (\ref{LiouvillePDE}), are the trajectories of the closed-loop ODE $\dot{x} = f_{\text{cl}}\left(x\left(t\right), p, t\right)$. If the nonlinear vector field $f_{\text{cl}}$ is Lipschitz, then the trajectories are unique, and hence the characteristic curves are non-intersecting. Thus, instead of solving the PDE boundary value problem (\ref{LiouvillePDE}), we can solve the following initial value problem \cite{HalderBhattacharya2011JGCD,HalderBhattacharya2010GNC}:
\begin{eqnarray}
\dot{\varphi} = - \left(\nabla \cdot f_{\text{cl}}\right) \: \varphi, \qquad \varphi\left(x_{0}, p, 0\right) = \varphi_{0},
\label{CharacteristicODE}
\end{eqnarray}
along the trajectories $x\left(t\right)$. Notice that solving (\ref{CharacteristicODE}) along one trajectory, is independent of the other, and hence the formulation is a natural fit for parallel implementation. This computation differs from Monte Carlo (MC) as shown in Table \ref{MCvsPF}.
\begin{table}
 \begin{center}
\caption{Comparison of joint PDF computation over $\mathbb{R}^{n_{x}+n_{p}}$: MC vs. PF}
\label{MCvsPF}
  \begin{tabular}{|l|l|l|}\hline
\textbf{Attributes} & \textbf{MC simulation} & \textbf{PF via MOC}\\ \hline\hline
Concurrency & Offline post-processing & Online \\
Accuracy & Histogram approximation & Exact arithmetic\\
Spatial discretization & Grid based & Meshless \\
ODEs per sample & $n_{x}$ & $n_{x}+1$\\\hline
 \end{tabular}
 \end{center}
\end{table}

We emphasize here that the MOC solution of Liouville equation is a Lagrangian (as opposed to Eulerian) computation and hence, has no residue or equation error. The latter would appear if we directly employ function approximation techniques to numerically solve the Liouville equation (see e.g. \cite{pantano2007least}). Here, instead we non-uniformly sample the \emph{known} initial PDF via Markov Chain Monte Carlo (MCMC) technique \cite{diaconis2009markov} and then co-integrate the state and density value at that state location over time. Thus, the numerical accuracy is as good as the temporal integrator.

Further, there is no loss of generality in this \emph{finite} sample computation. If at any fixed time $t>0$, one seeks to evaluate the joint PDF value at an arbitrary location $\widetilde{x}^{\star}(t)$ in the extended state space, then one could back-propagate $\widetilde{x}^{\star}(t)$ via the given dynamics till $t=0$, resulting $\widetilde{x}_{0}^{\star}$. Intuitively, $\widetilde{x}_{0}^{\star}$ signifies the initial condition from which the query point $\widetilde{x}^{\star}(t)$ could have come. If $\widetilde{x}_{0}^{\star} \in \text{supp}\left(\varphi_{0}\right)$, then we forward integrate (\ref{CharacteristicODE}) with $\widetilde{x}_{0}^{\star}$ as the initial condition, to determine joint PDF value at $\widetilde{x}^{\star}(t)$. If $\widetilde{x}_{0}^{\star} \notin \text{supp}\left(\varphi_{0}\right)$, then $\widetilde{x}^{\star}(t)\notin \text{supp}\left(\varphi\left(\widetilde{x}(t),t\right)\right)$, and hence the joint PDF value at $\widetilde{x}^{\star}(t)$ would be zero.

Notice that the divergence computation in (\ref{CharacteristicODE}) can be done analytically \emph{offline} for our case of LQR and gsLQR closed-loop systems, provided we obtain function approximations for aerodynamic coefficients. However, there are two drawbacks for such offline computation of the divergence. \textbf{First}, the accuracy of the computation will depend on the quality of function approximations for aerodynamic coefficients. \textbf{Second}, for nonlinear controllers like MPC \cite{camacho2004model}, which numerically realize the state feedback, analytical computation for closed-loop divergence is not possible. For these reasons, we implement an alternative \emph{online} computation of divergence in this paper. Using the Simulink\textsuperscript{\textregistered} command \texttt{linmod}, we linearize the \emph{closed-loop systems} about each characteristics, and obtain the instantaneous divergence as the trace of the time-varying Jacobian matrix. Algorithm \ref{AlgoUncProp} details this method for closed-loop uncertainty propagation. Specific simulation set up for our F-16 closed-loop dynamics is given in Section V.A.

\begin{algorithm}[h]
\caption{Closed-loop Uncertainty Propagation via MOC Solution of Liouville PDE}
\label{AlgoUncProp}
\begin{algorithmic}[1]
\footnotesize{\Require The initial joint PDF $\varphi_{0}\left(x_{0},p\right)$, closed-loop dynamics (\ref{FinalClosedLoopDynamics}), number of samples $N$, final time $t_{f}$, time step $\Delta t$.
\State Generate $N$ scattered samples $\{x_{0i},p_{i}\}_{i=1}^{N}$ from the initial PDF $\varphi_{0}\left(x_{0},p\right)$ \Comment{Using MCMC}
\State Evaluate the samples $\{x_{0i},p_{i}\}_{i=1}^{N}$ at $\varphi_{0}\left(x_{0},p\right)$, to get the point cloud $\{x_{0i},p_{i},\varphi_{0i}\}_{i=1}^{N}$
\For {$t = 0 :\Delta t : t_{f}$} \Comment{Index for time}
\For {$i = 1 : 1 : N$} \Comment{Index for samples}
\State Numerically integrate the closed-loop dynamics (\ref{FinalClosedLoopDynamics}) \Comment{Propagate states to obtain $\{x_{i}(t)\}_{i=1}^{N}$}
\State Compute $\nabla \cdot f_{\text{cl}}$ using Simulink\textsuperscript{\textregistered} command \texttt{linmod} \Comment{Since divergence at $i$\textsuperscript{th} sample at time $t$ = trace of}
\State $\qquad\qquad\qquad\qquad\qquad\qquad\qquad\qquad\qquad\qquad\qquad\qquad\qquad$ Jacobian of $f_{\text{cl}}$, evaluated at $x_{i}(t)$
\State Numerically integrate the characteristic ODE (\ref{CharacteristicODE}) \Comment{Propagate joint PDF values to get $\{\varphi_{i}(t) \triangleq \varphi\left(x_{i}(t), p_{i}, t\right)\}_{i=1}^{N}$}
\EndFor
\EndFor \Comment{We get time-varying probability-weighted scattered data $\{x_{i}(t), p_{i}, \varphi_{i}(t)\}_{i=1}^{N}$ for each time $t$}
}
\end{algorithmic}
\end{algorithm}

\subsection{Optimal Transport to Trim}

\subsubsection{Wasserstein metric}
To provide a quantitative comparison for LQR and gsLQR controllers' performance, we need a notion of ``distance" between the respective time-varying state PDFs and the desired state PDF. Since the controllers strive to bring the state trajectory ensemble to $x_{\text{trim}}$, hence we take $\varphi^{*}\left(x_{\text{trim}}\right)$, a Dirac delta distribution at $x_{\text{trim}}$, as our desired joint PDF. The notion of distance must compare the concentration of trajectories in the state space and for meaningful inference, should define a metric. Next, we describe \emph{Wasserstein metric}, that meets these axiomatic requirements \cite{HalderBhattacharyaCDC2011} of ``distance" on the manifold of PDFs.

\subsubsection{Definition}
Consider the metric space $\left(M,\ell_{2}\right)$ and take $y, \widehat{y} \in M$. Let $\mathcal{P}_{2}\left(M\right)$ denote the collection of all probability measures $\mu$ supported on $M$, which have finite 2\textsuperscript{nd} moment. Then the $L_{2}$ Wasserstein distance of order $2$, denoted as $_{2}W_{2}$, between two probability measures $\varsigma, \widehat{\varsigma} \in \mathcal{P}_{2}\left(M\right)$, is defined as
\begin{align}
_{2}W_{2}\left(\varsigma, \widehat{\varsigma}\right) := \left(\underset{\mu \in \mathcal{M}\left(\varsigma,\widehat{\varsigma}\right)}{\text{inf}} \displaystyle\int_{M \times M} \parallel y - \widehat{y} \parallel_{\ell_{2}}^{2} \: d\mu\left(y,\widehat{y}\right) \right)^{\frac{1}{2}}
\label{WassDef}
\end{align}
where $\mathcal{M}\left(\varsigma,\widehat{\varsigma}\right)$ is the set of all measures supported on the product space $M \times M$, with first marginal $\varsigma$, and second marginal $\widehat{\varsigma}$.

Intuitively, Wasserstein distance equals the \emph{least amount of work} needed to convert one distributional shape to the other, and can be interpreted as the cost for Monge-Kantorovich optimal transportation plan \cite{Villani2003}. The particular choice of $L_{2}$ norm with order 2 is motivated in \cite{HalderBhattacharyaCDC2012}. For notational ease, we henceforth denote $_{2}W_{2}$ as $W$. One can prove (p. 208, \cite{Villani2003}) that $W$ defines a metric on the manifold of PDFs.

\subsubsection{Computation of $W$}

In general, one needs to compute $W$ from its definition, which requires solving a \emph{linear program} (LP) \cite{HalderBhattacharyaCDC2011} as follows. Recall that the MOC solution of the Liouville equation, as explained in Section IV.A, results time-varying scattered data. In particular, at any \emph{fixed} time $t>0$, the MOC computation results scattered sample points $\mathcal{Y}_{t} := \{y_{i}\}_{i=1}^{m}$ over the state space, where each sample $y_{i}$ has an associated joint probability mass function (PMF) value $\varsigma_{i}$. If we sample the reference PDF likewise and let $\widehat{\mathcal{Y}}_{t} := \{\widehat{y}_{i}\}_{i=1}^{n}$, then computing $W$ between the instantaneous and reference PDF reduces to computing (\ref{WassDef}) between two sets of scattered data: $\{y_{i},\varsigma_{i}\}_{i=1}^{m}$ and $\{\widehat{y}_{j},\widehat{\varsigma}_{j}\}_{j=1}^{n}$. Further, if we interpret the squared inter-sample distance $c_{ij}:=\parallel y_{i} - \widehat{y}_{j} \parallel_{\ell_{2}}^{2}$ as the cost of transporting unit mass from location $y_{i}$ to $\widehat{y}_{j}$, then according to (\ref{WassDef}), computing $W^{2}$ translates to
\begin{eqnarray}
\text{minimize} \; \displaystyle\sum_{i=1}^{m}\displaystyle\sum_{j=1}^{n} \: c_{ij} \: \mu_{ij}
\label{HitchcockKoopmansLP}
\end{eqnarray}
subject to the constraints
\begin{equation*}
\displaystyle\sum_{j=1}^{n} \mu_{ij} = \varsigma_{i}, \qquad \forall \; y_{i} \in \mathcal{Y}_{t},
\tag{C1}
\end{equation*}
\begin{equation*}
\displaystyle\sum_{i=1}^{m} \mu_{ij} = \widehat{\varsigma}_{j}, \qquad \forall \; \widehat{y}_{j} \in \widehat{\mathcal{Y}}_{t},
\tag{C2}
\end{equation*}
\begin{equation*}
\qquad\qquad\quad\;\;\mu_{ij} \geqslant 0, \qquad\; \forall \; \left(y_{i},\widehat{y}_{j}\right) \in \mathcal{Y}_{t} \times \widehat{\mathcal{Y}}_{t}.
\tag{C3}
\end{equation*}
In other words, the objective of the LP is to come up with an optimal mass transportation policy $\mu_{ij} := \mu\left(y_{i} \rightarrow \widehat{y}_{j}\right)$ associated with cost $c_{ij}$. Clearly, in addition to constraints (C1)--(C3), (\ref{HitchcockKoopmansLP}) must respect the necessary feasibility condition
\begin{equation*}
\displaystyle\sum_{i=1}^{m} \varsigma_{i} = \displaystyle\sum_{j=1}^{n} \widehat{\varsigma}_{j},
\tag{C0}
\end{equation*}
denoting the conservation of mass. In our context of measuring the shape difference between two PDFs, we treat the joint PMF vectors $\varsigma_{i}$ and $\widehat{\varsigma}_{j}$ to be the marginals of some unknown joint PMF $\mu_{ij}$ supported over the product space $\mathcal{Y}_{t} \times \widehat{\mathcal{Y}}_{t}$. Since determining joint PMF with given marginals is not unique, (\ref{HitchcockKoopmansLP}) strives to find that particular joint PMF which minimizes the total cost for transporting the probability mass while respecting the normality condition.

Notice that, (\ref{HitchcockKoopmansLP}) is an LP in $m n$ variables, subject to $\left(m + n + mn\right)$ constraints, with $m$ and $n$ being the cardinality of the respective scattered data representation of the PDFs under comparison. As shown in \cite{HalderBhattacharyaCDC2012}, the main source of computational burden in solving this LP, stems from \emph{storage complexity}. It is easy to verify that the \emph{sparse constraint matrix representation} requires $\left(6mn + \left(m+n\right)d + m + n\right)$ amount of storage, while the same for \emph{non-sparse representation} is $\left(m+n\right)\left(mn + d + 1\right)$, where $d$ is the dimension of the support for each PDF. Notice that $d$ enters linearly through $\ell_{2}$ norm computation, but the storage complexity grows \emph{polynomially} with $m$ and $n$. We observed that with sparse LP solver MOSEK \cite{mosek2010mosek}, on a standard computer with 4 GB memory, one can go up to $m=n=$ 3000 samples. On the other hand, increasing the number of samples, increases the accuracy \cite{HalderBhattacharyaCDC2012} of finite-sample $W$ computation. This leads to numerical accuracy versus storage capacity trade off.

\subsubsection{Reduction of storage complexity}

For our purpose of computing $W\left(\varphi\left(x\left(t\right),t\right), \varphi^{*}\left(x_{\text{trim}}\right)\right)$, the storage complexity can be reduced by leveraging the fact that $\varphi^{*}\left(x_{\text{trim}}\right)$ is a stationary Dirac distribution. Hence, it suffices to represent the joint probability mass function (PMF) of $\varphi^{*}\left(x_{\text{trim}}\right)$ as a single sample located at $x_{\text{trim}}$ with PMF value unity. This trivializes the optimal transport problem, since
\begin{eqnarray}
W\left(t\right) \triangleq W\left(\varphi\left(x\left(t\right),t\right), \varphi^{*}\left(x_{\text{trim}}\right)\right)
= \sqrt{\displaystyle\sum_{i=1}^{n} \parallel x_{i}\left(t\right) - x_{\text{trim}} \parallel_{2}^{2} \: \gamma_{i}},
\label{TrivialTransport}
\end{eqnarray}
where $\gamma_{i} \geqslant 0$ denotes the joint PMF value at sample $x_{i}\left(t\right)$, $i=1,\hdots,n$. Consequently, the storage complexity reduces to $\left(nd + n + d\right)$, which is \emph{linear} in number of samples.

\section{Numerical Results}

\subsection{Robustness Against Initial Condition Uncertainty}

\subsubsection{Stochastic initial condition uncertainty}
We first consider analyzing the controller robustness subject to initial condition uncertainties. For this purpose, we let the initial condition $x_{0}$ to be a stochastic perturbation from $x_{\text{trim}}$, i.e. $x_{0} = x_{\text{trim}} + x_{\text{pert}}$, where $x_{\text{pert}}$ is a random vector with probability density $\varphi_{\text{pert}} = \mathcal{U}\left(\left[\theta^{\text{min}}_{\text{pert}},\theta^{\text{max}}_{\text{pert}}\right] \times \left[V^{\text{min}}_{\text{pert}},V^{\text{max}}_{\text{pert}}\right] \times \left[\alpha^{\text{min}}_{\text{pert}},
\alpha^{\text{max}}_{\text{pert}}\right] \times \left[q^{\text{min}}_{\text{pert}},q^{\text{max}}_{\text{pert}}\right] \right)$, where the perturbation range for each state, is listed in Table \ref{xPertRange}. Consequently, $x_{0}$ has a joint PDF $\varphi_{0}\left(x_{0}\right)$. For this analysis, we assume no actuator disturbance.
\begin{table}
 \begin{center}
\caption{Admissible state perturbation limits}
\label{xPertRange}
  \begin{tabular}{|l|l|}\hline
\textbf{$x_{\text{pert}}$} & \textbf{Interval}\\ \hline\hline
$\theta_{\text{pert}} \in \left[\theta^{\text{min}}_{\text{pert}},\theta^{\text{max}}_{\text{pert}}\right]$ & $\left[-35^{\circ}, + 35^{\circ}\right]$ \\
$V_{\text{pert}} \in \left[V^{\text{min}}_{\text{pert}},V^{\text{max}}_{\text{pert}}\right]$ & $\left[-\text{65 ft/s}, + \text{65 ft/s}\right]$\\
$\alpha_{\text{pert}} \in \left[\alpha^{\text{min}}_{\text{pert}},\alpha^{\text{max}}_{\text{pert}}\right]$ & $\left[-20^{\circ}, + 50^{\circ}\right]$\\
$q_{\text{pert}} \in \left[q^{\text{min}}_{\text{pert}},q^{\text{max}}_{\text{pert}}\right]$ & $\left[-\text{70 deg/s}, +\text{70 deg/s}\right]$\\\hline
 \end{tabular}
 \end{center}
\end{table}

\subsubsection{Simulation set up}

We generated pseudo-random Halton sequence \cite{Niederreiter1992} in $\left[\theta^{\text{min}}_{\text{pert}},\theta^{\text{max}}_{\text{pert}}\right] \times \left[V^{\text{min}}_{\text{pert}},V^{\text{max}}_{\text{pert}}\right] \times \left[\alpha^{\text{min}}_{\text{pert}},\alpha^{\text{max}}_{\text{pert}}\right] \times \left[q^{\text{min}}_{\text{pert}},q^{\text{max}}_{\text{pert}}\right]$, to sample the uniform distribution $\varphi_{\text{pert}}$, and hence $\varphi_{0}$ supported on the four dimensional state space. With 2000 Halton samples for $\varphi_{0}$, we propagate joint state PDFs for both LQR and gsLQR closed-loop dynamics via MOC ODE (\ref{CharacteristicODE}), from $t=0$ to 20 seconds, using fourth-order Runge-Kutta integrator with fixed step-size $\Delta t = 0.01$ s.

We observed that the \texttt{linmod} computation for evaluating time-varying divergence along each trajectory, takes the most of computational time. To take advantage of the fact that computation along characteristics are independent of each other, all simulations were performed using 12 cores with NVIDIA\textsuperscript{\textregistered} Tesla GPUs in MATLAB\textsuperscript{\textregistered} environment. It was noticed that with LQR closed-loop dynamics, the computational time for single sample from $t=0$ to $20$ s, is approx. 90 seconds. With sequential \texttt{for}-loops over 2000 samples, this scales to 50 hours of runtime. The same for gsLQR scales to 72 hours of runtime. In parallel implementation on Tesla, MATLAB\textsuperscript{\textregistered} \texttt{parfor}-loops were used to reduce these runtimes to 4.5 hours (for LQR) and 6 hours (for gsLQR), respectively.

\subsubsection{Density based qualitative analysis}
Fig. \ref{1dmarg} shows the evolution of univariate \emph{marginal error PDFs}. All marginal computations were performed using
algorithms previously developed by the authors \cite{HalderBhattacharya2011JGCD}. Since $\varphi_{0}$ and its marginals were uniform, Fig. \ref{1dmarg_001} shows similar trend for small $t$, and there seems no visible difference between LQR and gsLQR performance. As $t$ increases, both LQR and gsLQR error PDFs shrink about zero. By $t=20$ s (Fig. \ref{1dmarg_20}), both LQR and gsLQR controllers make the respective state marginals $\varphi^{j}(t)$, $j=1,\hdots,4$, converge to the Dirac distribution at $x^{j}_{\text{trim}}$, although the rate of convergence of gsLQR error marginals is faster than the same for LQR.
\begin{figure}[tb]
\begin{center}
\centering
\subfigure[Snapshot at $t = 0.01$ second.]
{\hspace*{-0.1in}\includegraphics[scale=0.36]{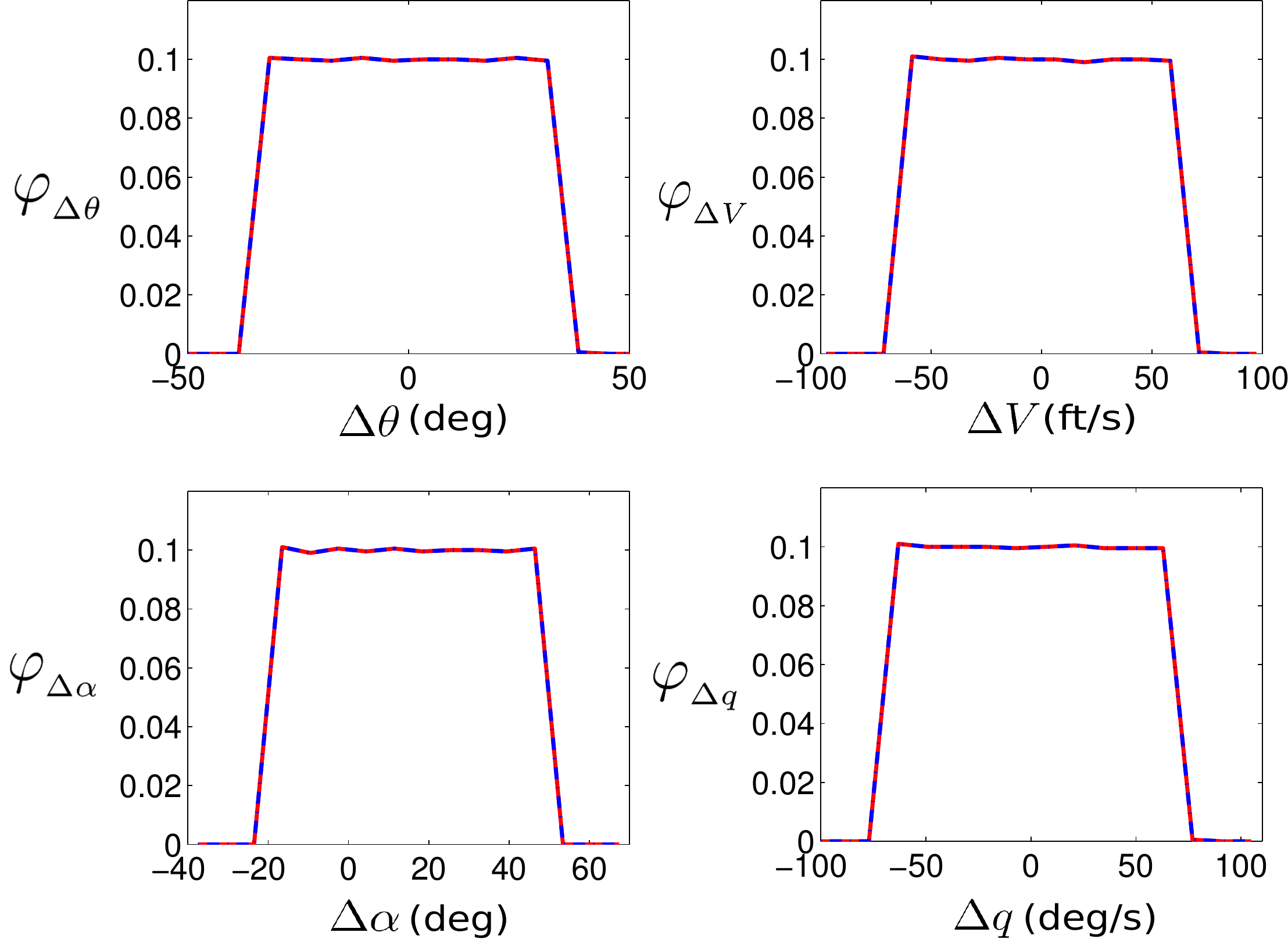}\label{1dmarg_001}}
\subfigure[Snapshot at $t = 1$ second.]
{\hspace*{0.1in}\includegraphics[scale=0.36]{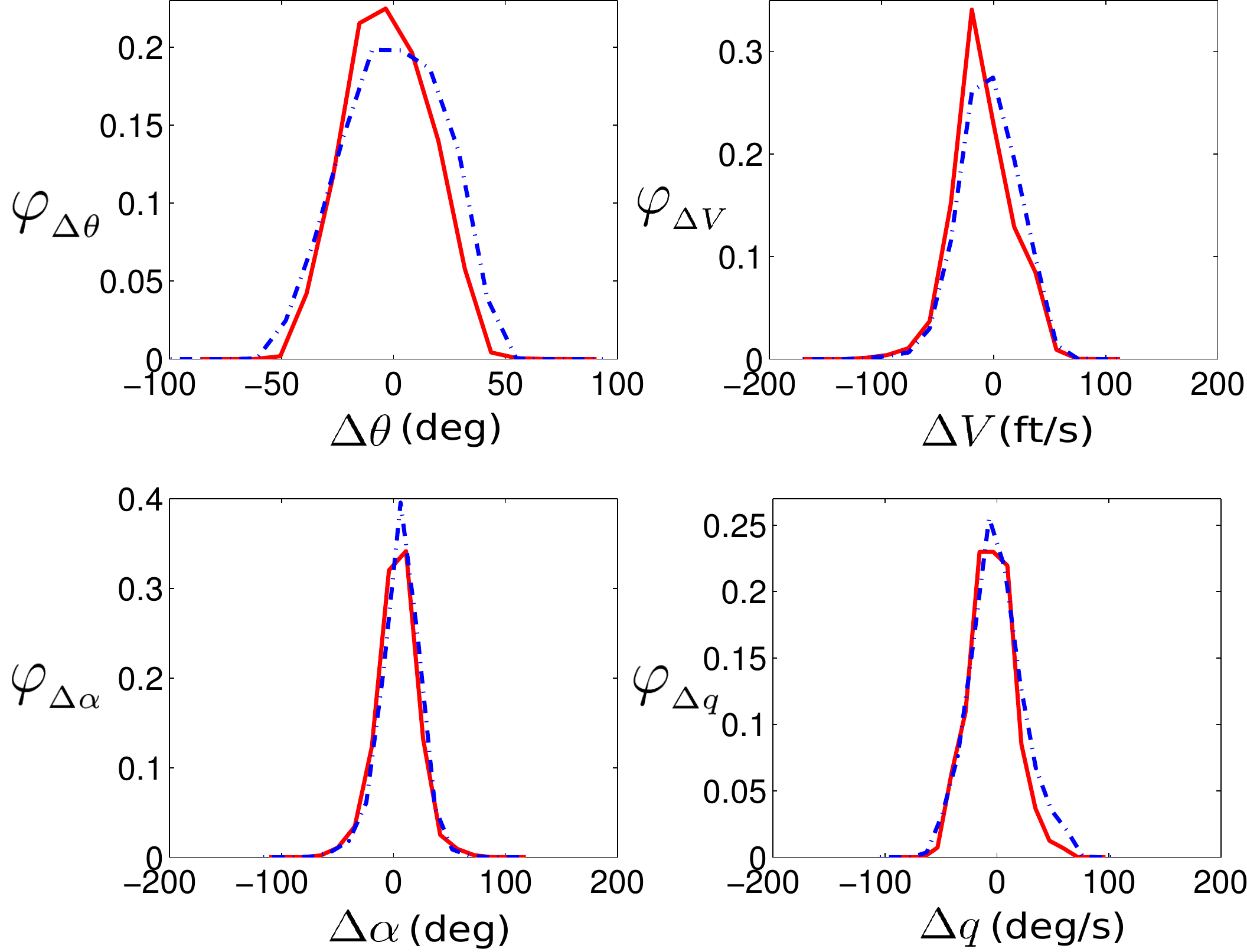}\label{1dmarg_1}}
\subfigure[Snapshot at $t = 5$ seconds.]
{\hspace*{-0.1in}\includegraphics[scale=0.36]{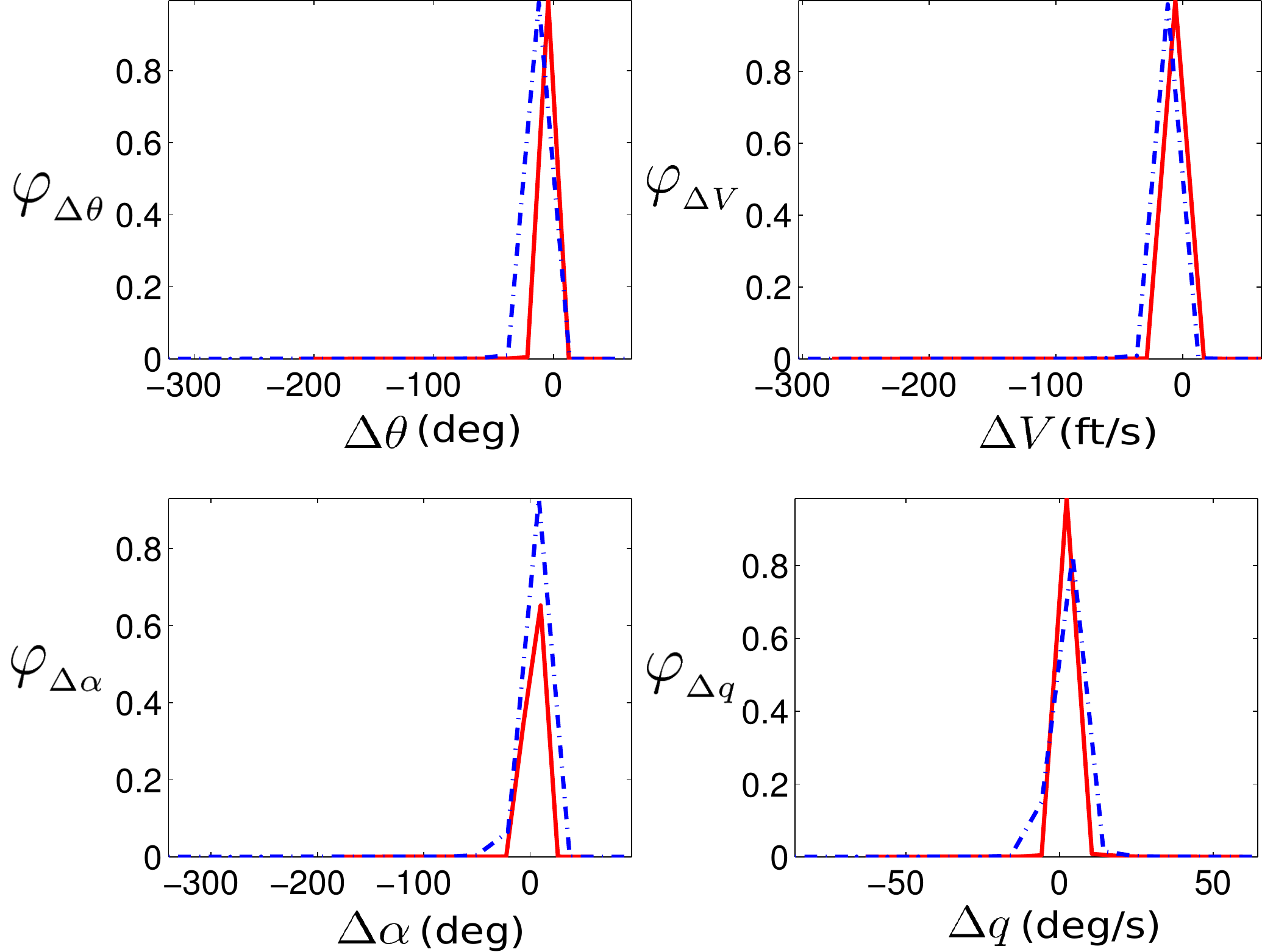}\label{1dmarg_5}}
\subfigure[Snapshot at $t = 20$ seconds.]
{\hspace*{0.1in}\includegraphics[scale=0.36]{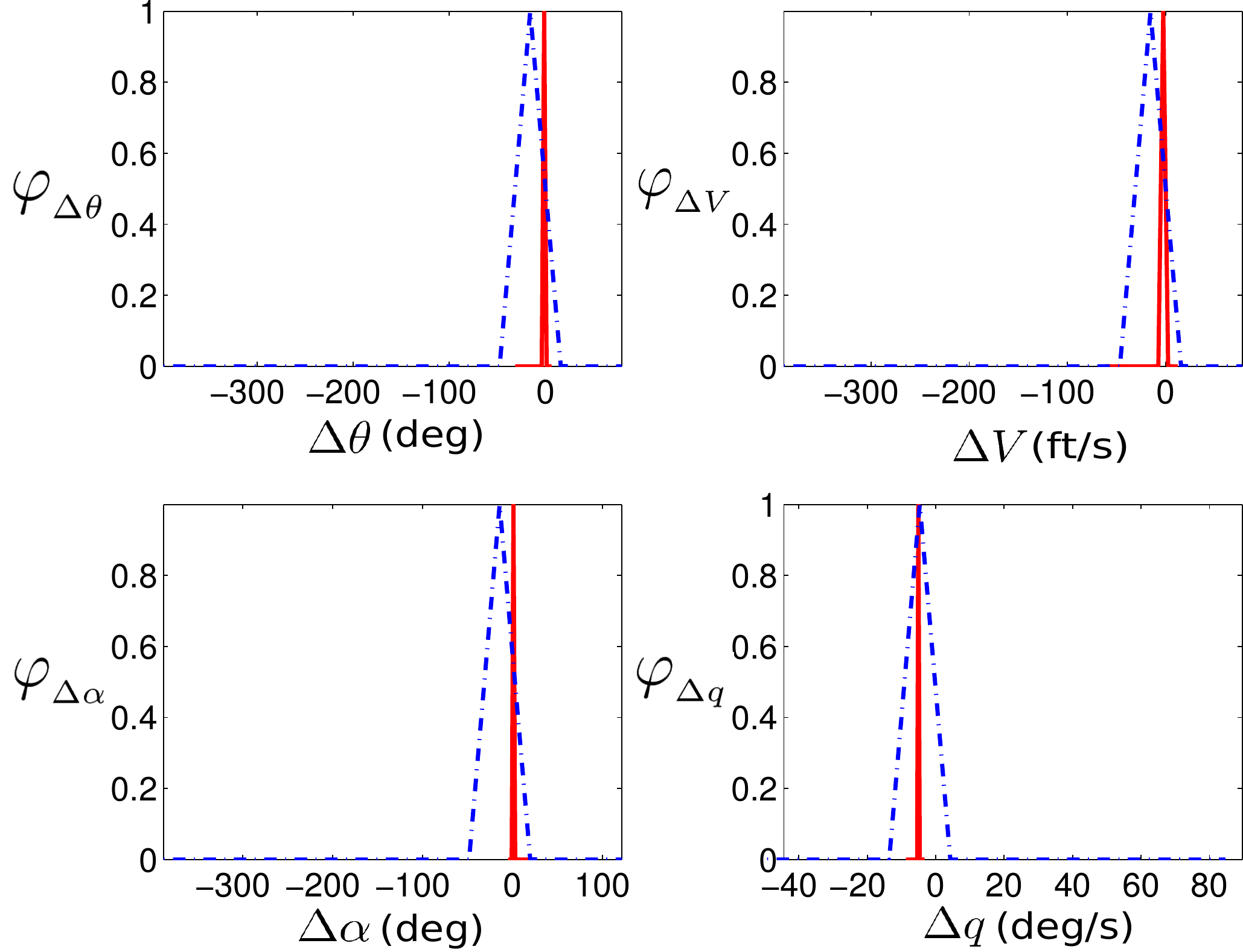}\label{1dmarg_20}}
\end{center}
\caption{Snapshots of univariate marginal error PDFs for each state, with LQR \textit{(blue, dashed)} and gsLQR \textit{(red, solid)} closed loop dynamics.}\label{1dmarg}
\end{figure}
Thus, Fig. \ref{1dmarg} \emph{qualitatively} show that both LQR and gsLQR exhibit comparable \emph{immediate} and
\emph{asymptotic} performance, as far as robustness against initial condition uncertainty is concerned. However,
there are some visible mismatches in Fig. \ref{1dmarg_1} and \ref{1dmarg_5}, that suggests the need for a careful quantitative
investigation of the transient performance.

The insights obtained from Fig. \ref{1dmarg} can be verified against the MC simulations (Fig. \ref{MonteCarlo}). Compared to LQR, the MC simulations reveal faster regulation performance for gsLQR, and hence corroborate the faster rate of convergence of gsLQR error marginals observed in Fig. \ref{1dmarg}. From Fig. \ref{MonteCarlo}, it is interesting to observe that by $t = 20$ s, some of the LQR trajectories do not converge to trim while all gsLQR trajectories do. For risk aware control design, it is natural to ask: how probable is this event, i.e. can we probabilistically assess the severity of the loss of performance for LQR?
\begin{figure}[tb]
\vspace*{-0.3in}
\begin{center}
\centering
\subfigure[State error vs. time for LQR controller]{\hspace*{-0.2in}\includegraphics[scale=.26]{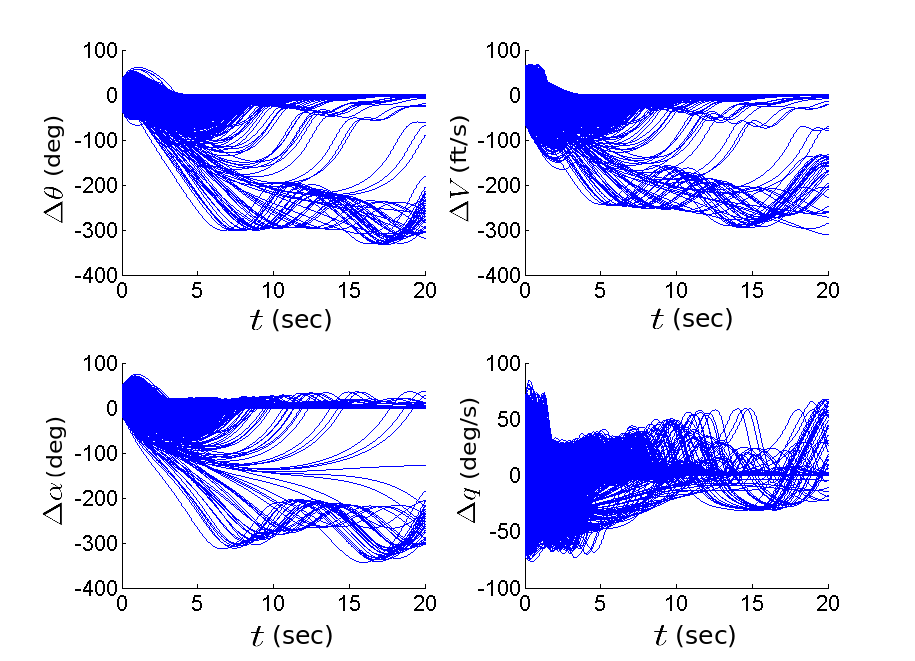}\label{lqr_prop}}\vspace*{0.15in}
\subfigure[State error vs. time for gsLQR controller]{\hspace*{-0.02in}\includegraphics[scale=.26]{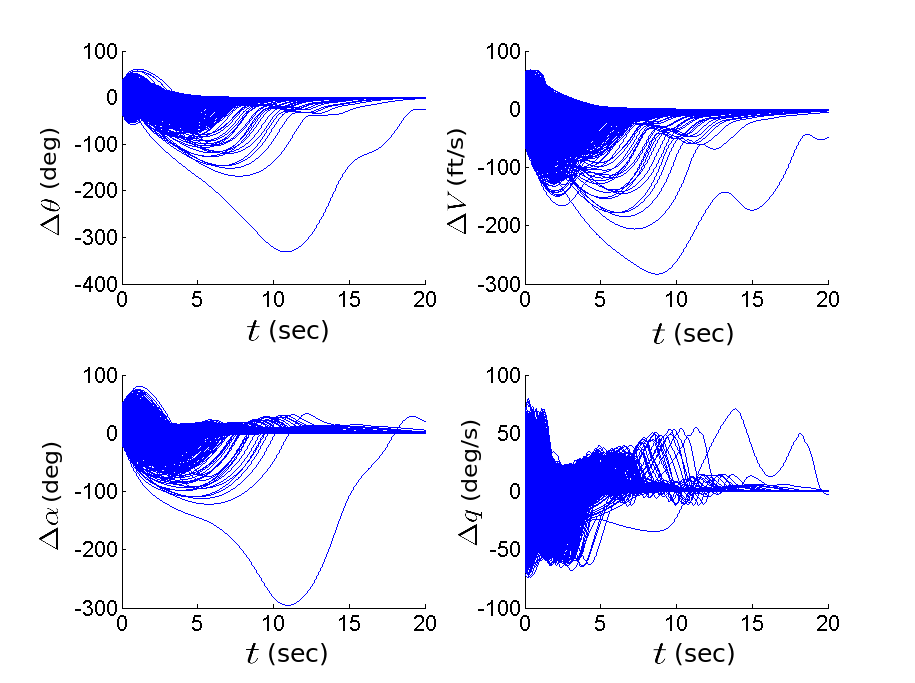}\label{gslqr_prop}}
\end{center}
\vspace*{-0.2in}
\caption{MC state error ($\Delta x^{j}\left(t\right) \triangleq x^{j}\left(t\right) - x^{j}_{\text{trim}}$, $j=1,\hdots,4$) trajectories for LQR and gsLQR closed-loop dynamics.}\label{MonteCarlo}
\end{figure}
To address this question, in Fig. \ref{logrho}, we plot the time evolution of the peak value of LQR joint state PDF, and compare that with the joint state PDF values along the LQR closed-loop trajectories that don't converge to $x_{\text{trim}}$ by $20$ s. Fig. \ref{logrho} reveals that the probabilities that the LQR trajectories don't converge, remain \emph{at least} an order of magnitude less than the peak value of the LQR joint PDF. In other words, the performance degradation for LQR controller, as observed in Fig. \ref{lqr_prop}, is a low-probability event. This conclusion can be further verified from Fig. \ref{MinMax}, which shows that for gsLQR controller, both maximum and minimum probability trajectories achieve satisfactory regulation performance by $t= 20$ s. However, for LQR controller, although the maximum probability trajectory achieves regulation performance as good as the corresponding gsLQR case, the minimum probability LQR trajectory results in poor regulation. Furthermore, even for the maximum probability trajectories (Fig. \ref{MinMax}, \emph{top row}), there are transient performance mismatch between LQR and gsLQR, for approximately 3--8 s.

\begin{figure}[t]
\begin{minipage}[b]{0.5\linewidth}
\centering
\vspace*{-0.01in}
\includegraphics[width=2.9in]{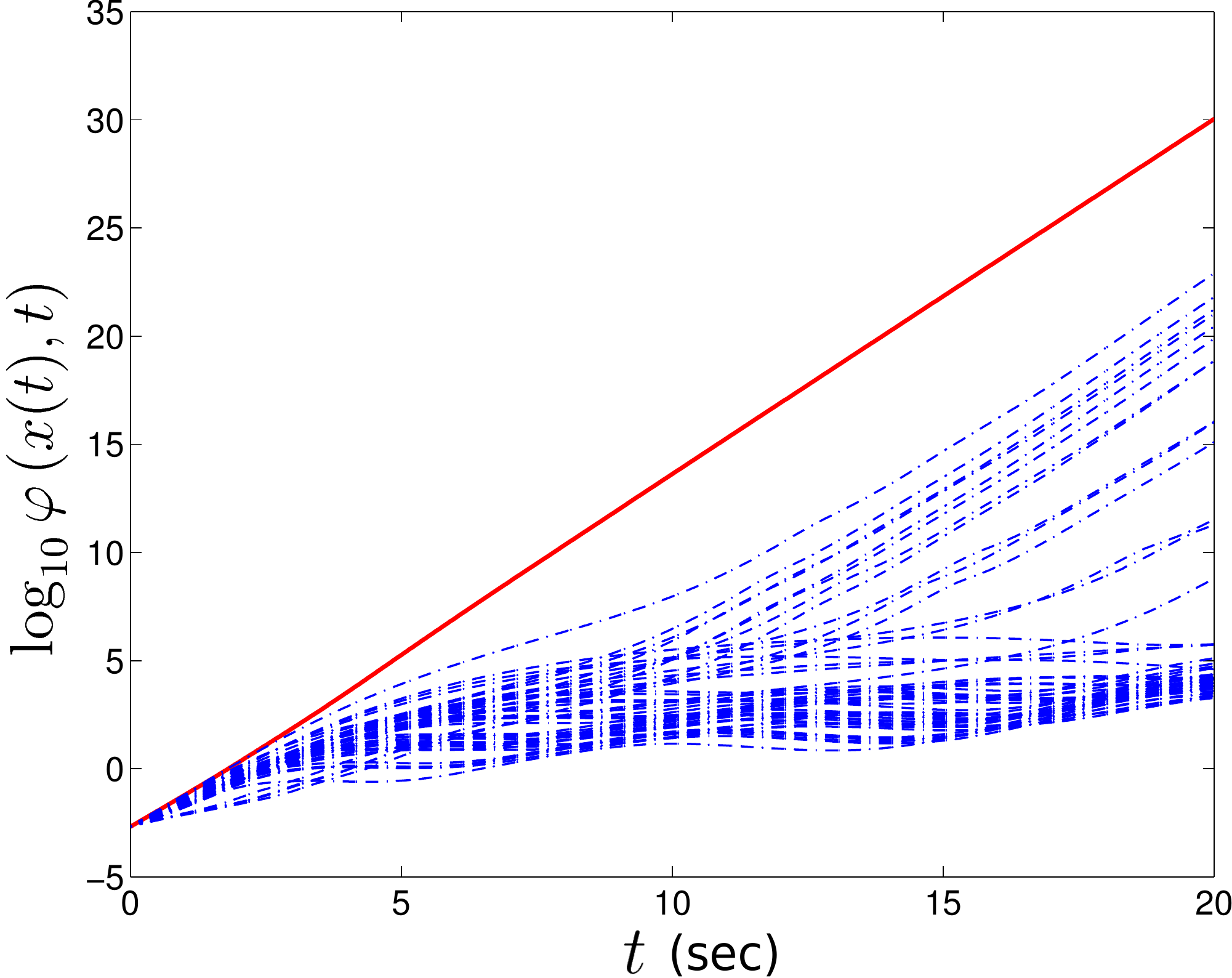}
\caption{Time evolution of \emph{maximum value} of joint PDF $\varphi_{\text{LQR}}\left(x(t), t\right)$ (\textit{red solid}) and $\varphi_{\text{LQR}}\left(x(t), t\right)$ \emph{along the diverging trajectories} (\textit{blue dashed}), as seen in Fig. \ref{lqr_prop}. The plots are in \emph{log-linear} scale.}
\label{logrho}
\end{minipage}
\hspace{0.2cm}
\begin{minipage}[b]{0.5\linewidth}
\centering
\includegraphics[width=3.2in]{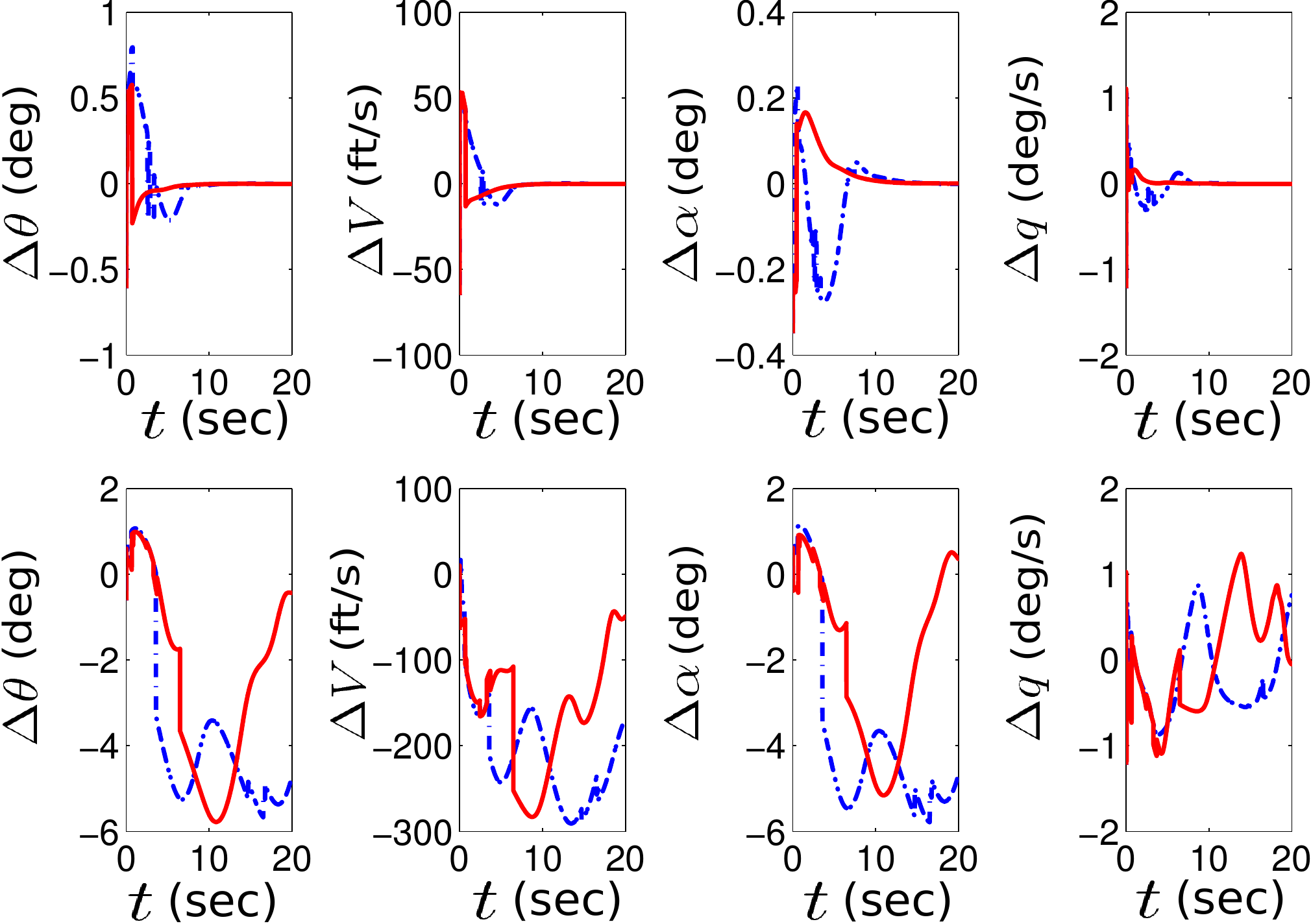}
\caption{Time evolution of the most likely (\emph{top row}) and least likely (\emph{bottom row}) state errors for LQR (\emph{blue dashed}) and gsLQR (\emph{red solid}) closed-loop dynamics.}
\label{MinMax}
\end{minipage}
\end{figure}

\subsubsection{Optimal transport based quantitative analysis}
From a systems-control perspective, instead of performing an \emph{elaborate qualitative} statistical analysis as above, one would like to have a \emph{concise and quantitative} robustness analysis tool, enabling the inferences of Section V.A.2 and V.A.3. We now illustrate that the Wasserstein distance introduced in Section IV.B, serves this purpose.

In this formulation, a controller is said to have better regulation performance if it makes the closed-loop state PDF converge faster to the Dirac distribution located at $x_{\text{trim}}$. In other words, for a better controller, at all times, the distance between the closed-loop state PDF and the Dirac distribution, as measured in $W$, must remain smaller than the same for the other controller. Thus, we compute the time-evolution of the two Wasserstein distances
\begin{eqnarray}
 W_{\text{LQR}}\left(t\right) &\triangleq& W\left(\varphi_{\text{LQR}}\left(x\left(t\right), t\right),\varphi^{*}\left(x_{\text{trim}}\right)\right),\\
 W_{\text{gsLQR}}\left(t\right) &\triangleq& W\left(\varphi_{\text{gsLQR}}\left(x\left(t\right), t\right), \varphi^{*}\left(x_{\text{trim}}\right)\right).
 \end{eqnarray}
The schematic of this computation is shown in Fig. \ref{LQRgsLQRcomparison}.
\begin{figure}[h]
\begin{center}
\centering
\hspace*{-0.43in}
\includegraphics[scale=0.42]{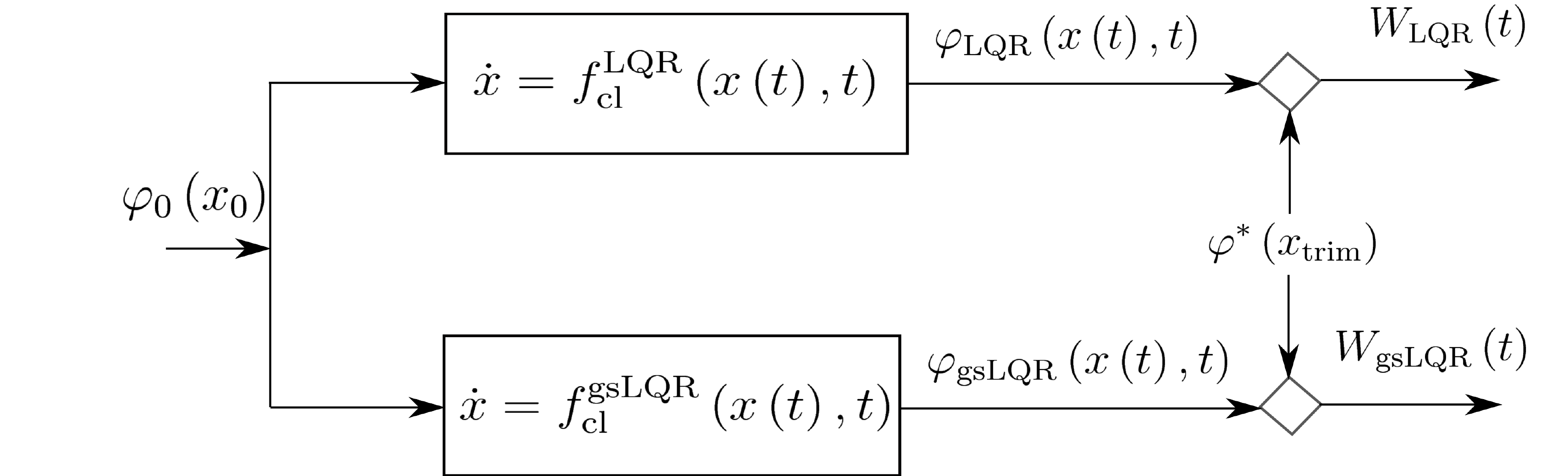}
\end{center}
\caption{Schematic of probabilistic robustness comparison for controllers based on Wasserstein metric. The ``diamond" denotes the Wasserstein computation by solving the Monge-Kantorovich optimal transport. The internal details of LQR and gsLQR closed-loop dynamics blocks are as in Fig. \ref{ClosedLoopBlockDiagrams}.}
\label{LQRgsLQRcomparison}
\end{figure}

Fig. \ref{wdist} indeed confirms the qualitative trends, observed in the density based statistical analysis mentioned before, that LQR and gsLQR exhibit comparable immediate and asymptotic performance. Furthermore, Fig. \ref{wdist} shows that for $t=3-8$ seconds, $W_{\text{LQR}}$ stays higher than $W_{\text{gsLQR}}$, meaning the gsLQR joint PDF $\varphi_{\text{gsLQR}}\left(x(t), t\right)$ is closer to $\varphi^{*}\left(x_{\text{trim}}\right)$, compared to the LQR joint PDF $\varphi_{\text{LQR}}\left(x(t), t\right)$. This corroborates well with the transient mismatch observed in Fig. \ref{1dmarg_5}. As time progresses, both $W_{\text{LQR}}$ and $W_{\text{gsLQR}}$ converge to zero, meaning the convergence of both LQR and gsLQR closed-loop joint state PDFs to the Dirac distribution at $x_{\text{trim}}$.
\begin{figure}[thb]
\begin{center}
\centering
\hspace*{-0.1in}\includegraphics[scale=0.42]{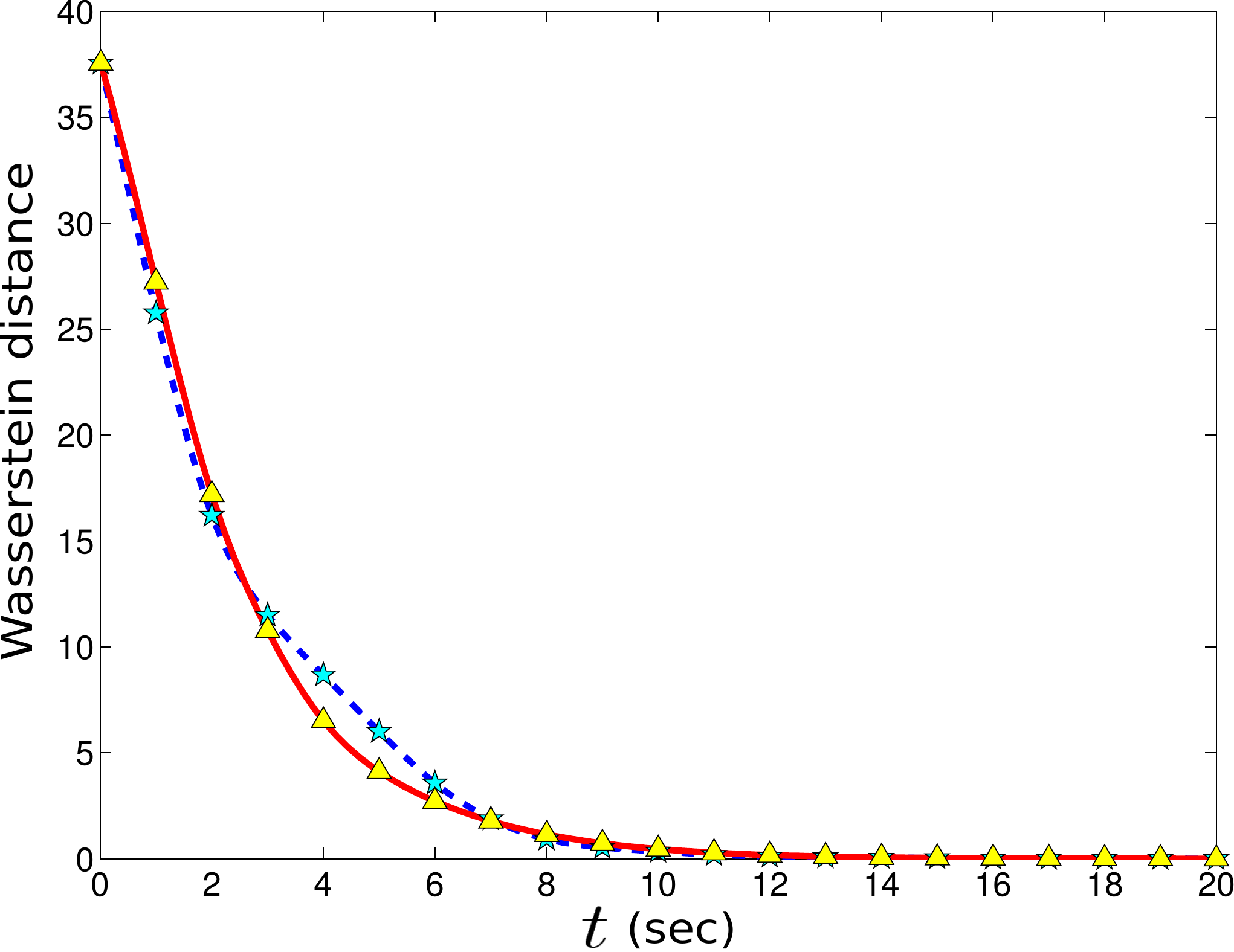}
\vspace*{-0.1in}
\end{center}
\caption{Comparison of time histories of $W\left(\varphi_{\text{LQR}}(t), \varphi^{*}\right)$ \textit{(blue dashed, star)} and $W\left(\varphi_{\text{gsLQR}}(t), \varphi^{*}\right)$ \textit{(red solid, triangle)}.}
\label{wdist}
\end{figure}

\begin{remark}
At this point, we highlight a subtle distinction between the two approaches of probabilistic robustness analysis presented above: (1) \emph{density based qualitative analysis}, and (2) \emph{the optimal transport based quantitative analysis using Wasserstein distance}. For density based qualitative analysis, controller performance assessment was done using Fig. \ref{1dmarg} that compares the asymptotic convergence of the univariate marginal state PDFs. However, this analysis is only \emph{sufficient} since convergence of marginals does not \emph{necessarily} imply convergence of joints. Conversely, the optimal transport based quantitative analysis is \emph{necessary and sufficient} since Fig. \ref{wdist} compares the Wasserstein distance between the joint PDFs. We refer the readers to Appendix A for a precise statement and proof.

Further, since $W_{\text{LQR}}\left(t\right) \rightarrow 0$ for large $t$, we can affirmatively say that the divergent LQR trajectories are indeed of low-probability, as hinted by Fig. \ref{logrho} and \ref{MinMax}. Otherwise, $W_{\text{LQR}}\left(t\right)$ would show a steady-state error. Thus, the Wasserstein distance is shown to be an effective way of comparing the robustness of controllers.
\end{remark}

\subsection{Robustness Against Parametric Uncertainty}

\subsubsection{Deterministic initial condition with stochastic parametric uncertainty}
Instead of the stochastic initial condition uncertainties described in Section V.A.1, we now consider uncertainties in three parameters: mass of the aircraft ($m$), true $x$-position of c.g. ($x_{\text{cg}}$), and pitch moment-of-inertia ($J_{yy}$). The uncertainties in these geometric parameters can be attributed to the variable rate of fuel consumption depending on the flight conditions. For the simulation purpose, we assume that each of these three parameters has $\pm \Delta \%$ \emph{uniform} uncertainties about their nominal values listed in Table \ref{ParamF16}. To verify the controller robustness, we vary the parametric uncertainty range by allowing $\Delta = 0.5, 2.5, 5, 7.5$ and $15$. As before, we set the actuator disturbance $w=0$.

\subsubsection{Simulation set up}
We let the initial condition be a deterministic vector: $x_{0} = x_{\text{trim}} + x_{\text{pert}}$, where $x_{\text{pert}} = \left[1.1803\:\text{rad}, \; 5.1058\:\text{ft/s}, \; 2.8370\:\text{rad}, \; 10^{-4}\:\text{rad/s}\right]^{\top}$. We keep the rest of the simulation set up same as in the previous case. Notice that since $\dot{p} = 0$, the characteristic ODE for joint PDF evolution remains the same. However, the state trajectories, along which the characteristic ODE needs to be integrated, now depend on the realizations of the random vector $p$.

\subsubsection{Density based qualitative analysis}
Due to parametric uncertainties in $p \triangleq \left[m, \; x_{\text{cg}}, \; J_{yy}\right]^{\top}$, we now have $n_{x} = 4, n_{p} = 3$, and hence the joint PDF evolves over the extended state space $\widetilde{x}(t) \triangleq \left[x(t), \; p\right]^{\top} \in \mathbb{R}^{7}$. Since we assumed $x_{0}$ to be deterministic, both initial and asymptotic joint PDFs $\varphi_{0}$ and $\varphi_{\infty}$ are degenerate distributions, supported over the three dimensional parametric subspace of the extended state space in $\mathbb{R}^{7}$. In other words, $\varphi_{0} = \varphi_{p}\left(p\right) \delta\left(x - x_{0}\right)$, and $\varphi_{\infty} = \varphi_{p}\left(p\right) \delta\left(x - x_{\text{trim}}\right)$, i.e. the PDFs $\varphi_{0}$ and $\varphi_{\infty}$ differ only by a translation of magnitude $\parallel x_{0} - x_{\text{trim}}\parallel_{2} = \parallel x_{\text{pert}} \parallel_{2}$. However, for any intermediate time $t \in \left(0,\infty\right)$, the joint PDF $\varphi\left(\widetilde{x}(t), t\right)$ has a support obtained by nonlinear transformation of the initial support. This is illustrated graphically in Fig. \ref{ParamUncDegen}.

\begin{figure}[tb]
\begin{center}
\centering
\includegraphics[width=\textwidth]{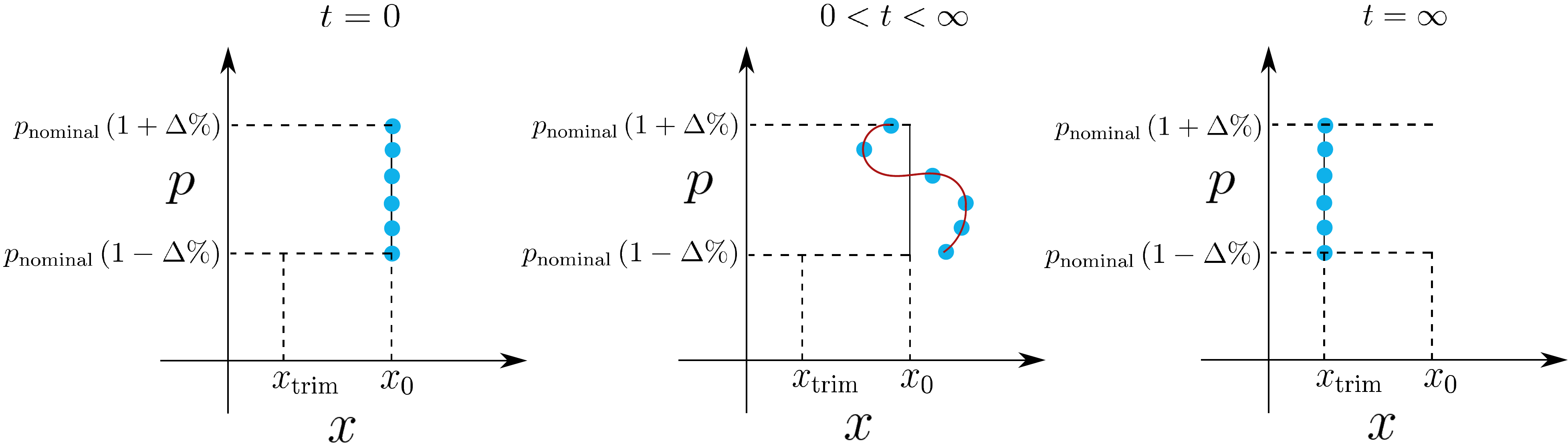}
\end{center}
\vspace*{-0.2in}
\caption{A schematic of how the support of a joint PDF evolves in the extended state space under parametric uncertainty. For ease of understanding, we illustrate here a case for one state $x$ and one parameter $p$. Since $x_{0}$ is deterministic but $p$ is random, the initial joint PDF $\varphi_{0}$ is simply the univariate parametric PDF $\varphi_{p}(p)$ translated to $x=x_{0}$. Consequently, $\varphi_{0}$ is supported on a straight line segment (one dimensional subspace) in the two dimensional extended state space, as shown in the \emph{left figure}. For $0<t<\infty$, due to state dynamics, the samples (denoted as \emph{circles}) on that line segment move in the horizontal ($x$) direction while keeping the respective ordinate ($p$) value constant, resulting the instantaneous support to be a curve (\emph{middle figure}). If the system achieves regulation, then $\lim_{t\rightarrow\infty} x(t) = x_{\text{trim}}$, $\forall p$ in the parametric uncertainty set, resulting the asymptotic joint PDF $\varphi_{\infty}$ to be supported on a straight line segment (\emph{right figure}) at $x=x_{\text{trim}}$.}
\label{ParamUncDegen}
\end{figure}

The MC simulations in Fig. \ref{MonteCarloParamUnc} show that both LQR and gsLQR have similar asymptotic performance, however, the transient overshoot for LQR is much larger than the same for gsLQR. Hence, the transient performance for gsLQR seems to be more robust against parametric uncertainties. Similar trends were observed for other values of $\Delta$.

\begin{figure}[t]
\vspace*{-0.3in}
\begin{center}
\centering
\subfigure[State error vs. time for LQR controller]{\hspace*{-0.2in}\includegraphics[scale=.26]{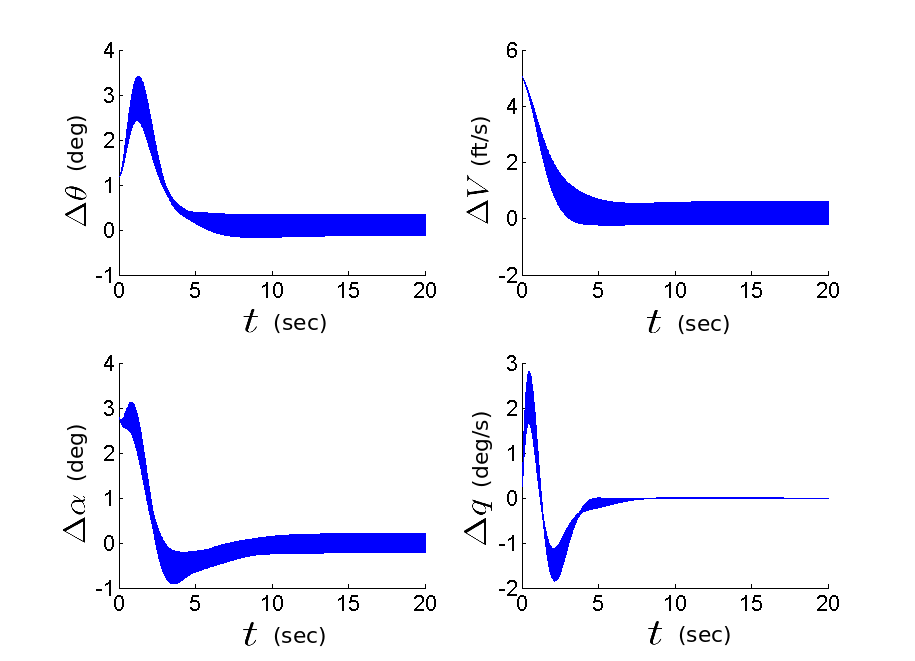}\label{lqr_param5perc}}\vspace*{0.15in}
\subfigure[State error vs. time for gsLQR controller]{\hspace*{-0.02in}\includegraphics[scale=.26]{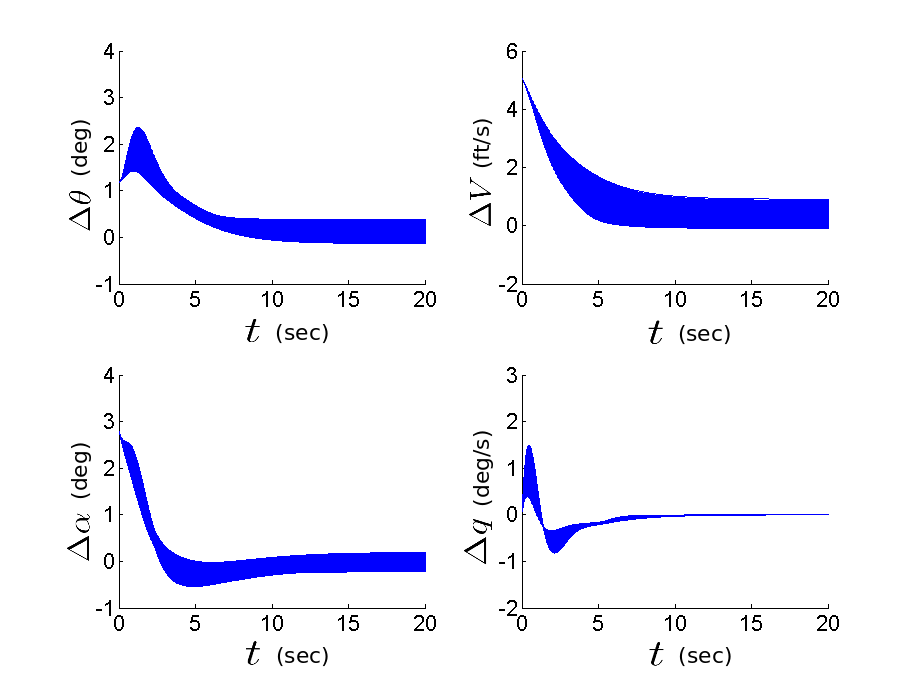}\label{gslqr_param5perc}}
\end{center}
\caption{MC state error ($\Delta x^{j}\left(t\right) \triangleq x^{j}\left(t\right) - x^{j}_{\text{trim}}$, $j=1,\hdots,4$) trajectories for LQR and gsLQR closed-loop dynamics, with $\pm 2.5\%$ uniform uncertainties in $p = \left[m, \; x_{\text{cg}}, \; J_{yy}\right]^{\top}$, i.e. $p = p_{\text{nominal}}\left(1 \pm \Delta\%\right)$, where $\Delta = 2.5$, and $p_{\text{nominal}}$ values are listed in Table \ref{ParamF16}.}\label{MonteCarloParamUnc}
\end{figure}

\subsubsection{Optimal transport based quantitative analysis}
Here, we solve the LP (\ref{HitchcockKoopmansLP}) with cost
\begin{eqnarray}
c_{ij} = n \displaystyle\sum_{i=1}^{n}\parallel x_{i}(t) - x_{\text{trim}} \parallel_{2}^{2} + \displaystyle\sum_{i=1}^{n}\displaystyle\sum_{j=1}^{n}\displaystyle\sum_{k=1}^{n_{p}=3} \left(p_{k}\left(i\right) - p_{k}\left(j\right)\right)^{2},
\label{ParamLPcost}
\end{eqnarray}
with $\varsigma_{i}$ being the joint PMF value at the $i$\textsuperscript{th} sample location $\widetilde{x}_{i}(t) = \left[x_{i}(t), \; p(i)\right]^{\top}$, and $\widehat{\varsigma}_{j}$ being the trim joint PMF value at the $j$\textsuperscript{th} sample location $\left[x_{\text{trim}}, \; p(j)\right]^{\top}$. Fig. \ref{WlqrParamUnc} and \ref{WgslqrParamUnc} show $W\left(t\right)$ vs. $t$ under parametric uncertainty for LQR and gsLQR, respectively. For both the controllers, the plots confirm that larger parametric uncertainty results in larger transport efforts at all times, causing higher value of $W$. In both cases, the deterministic (no uncertainty) $W$ curves (\emph{dashed lines} in Fig. \ref{WlqrParamUnc} and \ref{WgslqrParamUnc}) almost coincide with those of $\pm 0.5\%$ parametric uncertainties. Notice that in the deterministic case, $W$ is simply the Euclidian distance of the current state from trim, i.e. convergence in $W$ reduces to the classical $\ell_{2}$ convergence of a signal.

\begin{figure}[htb]
\begin{minipage}[b]{0.5\linewidth}
\centering
\includegraphics[width=3.2in]{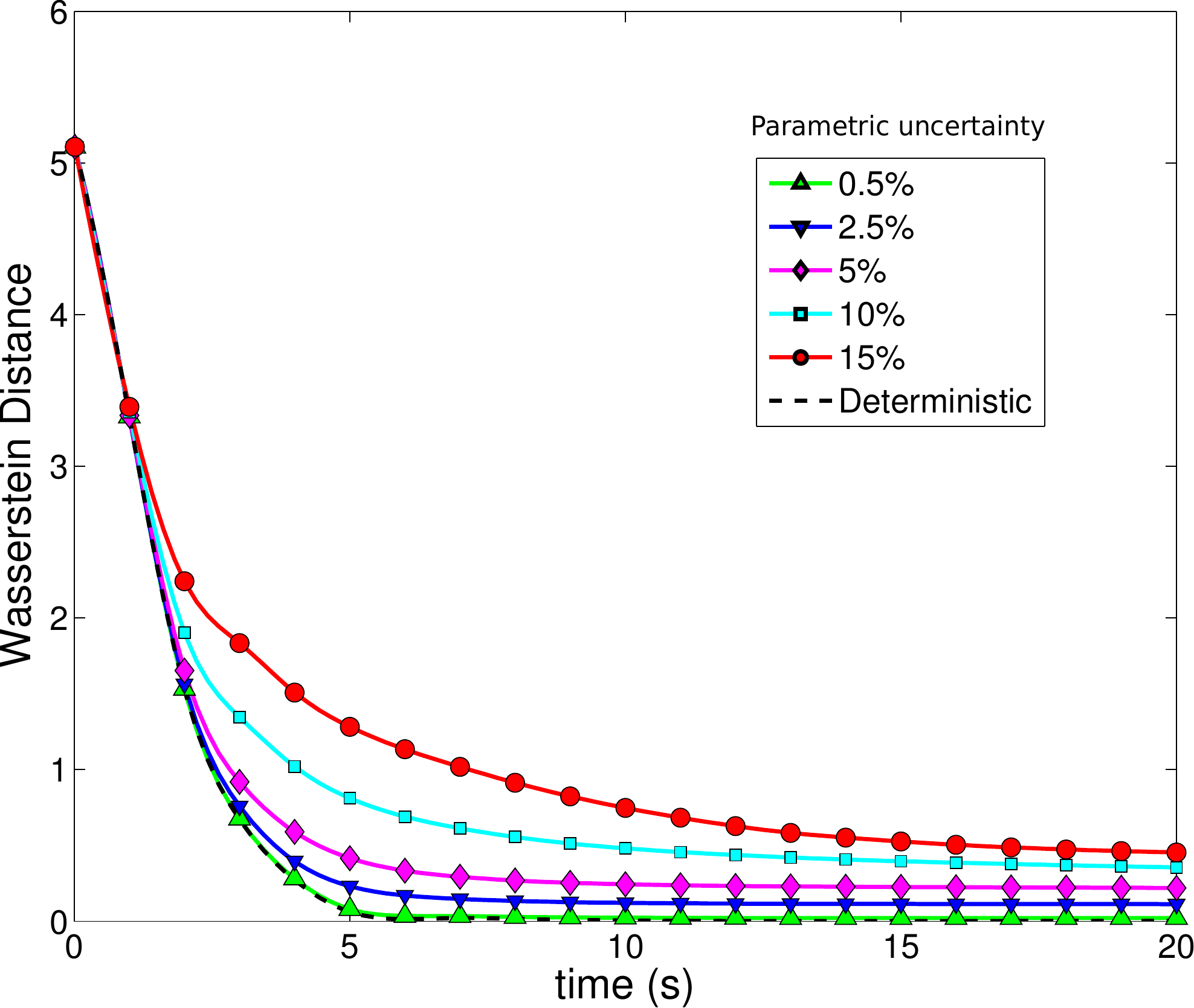}
\caption{Time evolution of Wasserstein distance for LQR, with varying levels of $\Delta$.}
\label{WlqrParamUnc}
\end{minipage}
\hspace{0.2cm}
\begin{minipage}[b]{0.5\linewidth}
\centering
\includegraphics[width=3.2in]{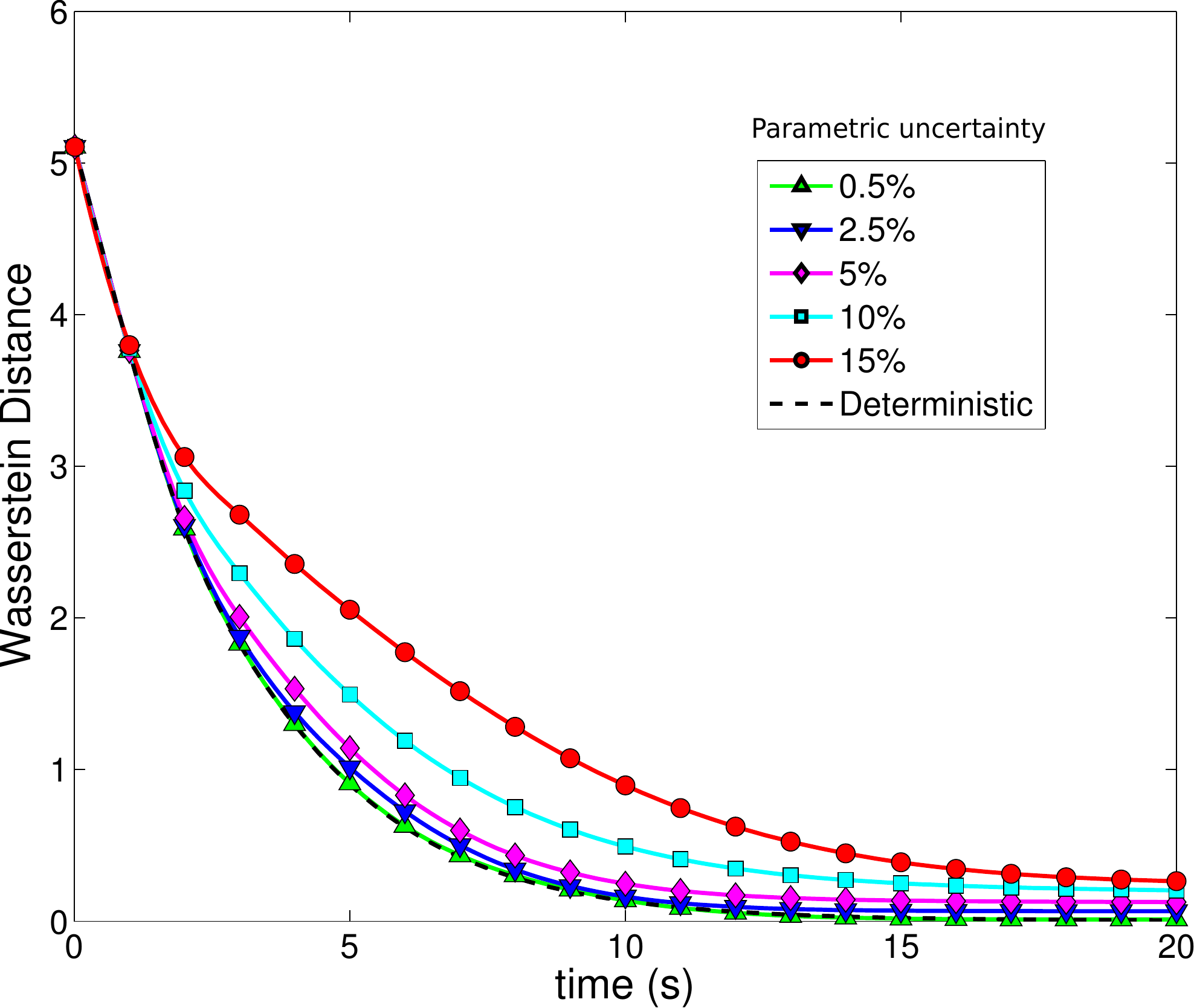}
\caption{Time evolution of Wasserstein distance for gsLQR, with varying levels of $\Delta$.}
\label{WgslqrParamUnc}
\end{minipage}
\end{figure}

It is interesting to compare the LQR and gsLQR performance against parametric uncertainty for each fixed $\Delta$. For $0-3$ s, the rate-of-convergence for $W(t)$ is faster for LQR, implying probabilistically faster regulation. However, the LQR $W$ curves tend to flatten out after 3 s, thus slowing down its joint PDF's rate-of-convergence to $\varphi_{\infty}$. On the other hand, gsLQR $W$ curves exhibit somewhat opposite trend. The initial regulation performance for gsLQR is slower than that of LQR, but gsLQR achieves better \emph{asymptotic} performance by bringing the probability mass closer to $\varphi_{\infty}$ than the LQR case, resulting smaller values of $W$. Further, one may notice that for large $(\pm 15\%)$ parametric uncertainties, the $W$ curve for LQR shows a mild bump around 3 s, corresponding to the significant transient overshoot observed in Fig. \ref{lqr_param5perc}. This can be contrasted with the corresponding $W$ curve for gsLQR, that does not show any prominent effect of transient overshoot at that time. The observation is consistent with the MC simulation results in Fig. \ref{gslqr_param5perc}. Thus, we can conclude that gsLQR is more robust than LQR, against parametric uncertainties.

\subsection{Robustness Against Actuator Disturbance}

\subsubsection{Stochastic initial condition uncertainty with actuator disturbance}
Here, in addition to the initial condition uncertainties described in Section V.A.1, we consider actuator disturbance in elevator. Our objective is to analyze how the additional disturbance in actuator affects the regulation performance of the controllers.

\subsubsection{Simulation set up}
We let the initial condition uncertainties to be described as in Table \ref{xPertRange}, and consequently the initial joint PDF is uniform. Further, we assume that the elevator is subjected to a periodic disturbance of the form $w(t) = 6.5 \sin\left(\Omega t\right)$. The simulation results of Section V.A.1 corresponds to the special case when the forcing angular frequency $\Omega = 0$. To investigate how $\Omega>0$ alters the system response, we first perform frequency-domain analysis of the LQR closed-loop system, linearized about $x_{\text{trim}}$. Fig. \ref{SingularValueFreqResponse} shows the variation in singular value magnitude (in dB) with respect to frequency (rad/s), for the transfer array from disturbance $w(t)$ to states $x(t)$. This frequency-response plot shows that the \emph{peak frequency} is $\omega^{\star} \approx 2$ rad/s.

\begin{figure}[tb]
\begin{center}
\centering
\includegraphics[scale=0.68]{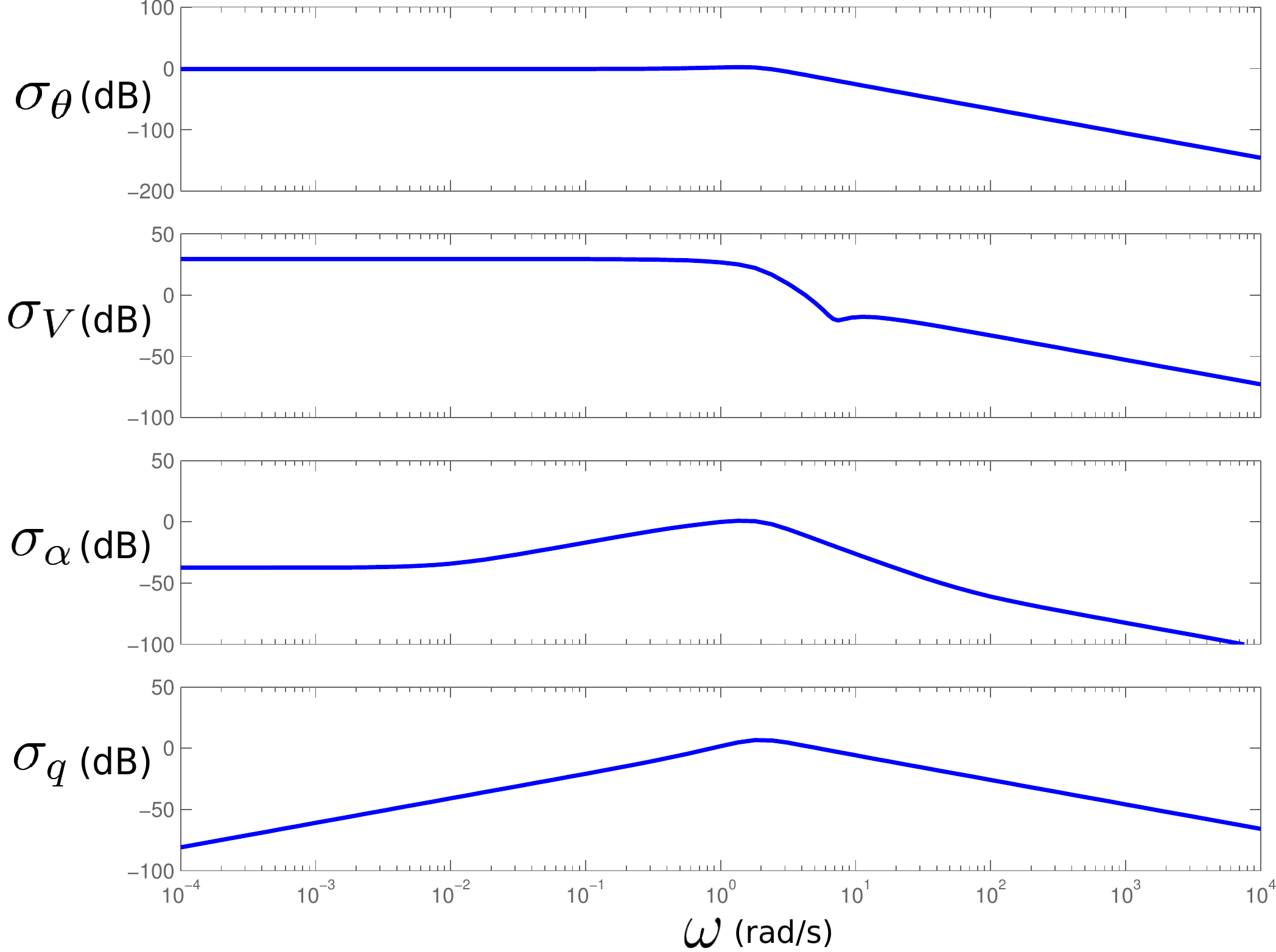}
\end{center}
\vspace*{-0.2in}
\caption{Singular values for the LQR closed-loop dynamics, linearized about $x_{\text{trim}}$, computed from the $4\times1$ transfer array corresponding to the disturbance to states.}
\label{SingularValueFreqResponse}
\end{figure}

\subsubsection{Density based qualitative analysis}
To compare the LQR and gsLQR performance under peak frequency excitation (as per linearized LQR analysis), we set $\Omega = \omega^{\star} = 2$ rad/s, and evolve the initial uniform joint PDF over the LQR and gsLQR closed-loop state space. Notice that the LQR closed-loop dynamics is \emph{nonlinear}, and the extent to which the linear analysis would be valid, depends on the robustness of regulation performance. Fig. \ref{lqr_ActNoise2} shows the LQR state error trajectories from the MC simulation. It can be observed that after $t = 10$ s, most of the LQR trajectories exhibit constant frequency oscillation with $\omega = 2$ rad/s. This trend is even more prominent for the gsLQR trajectories in Fig. \ref{gslqr_ActNoise2}, which seem to settle to the constant frequency oscillation quicker than the LQR case.
\begin{figure}[t]
\vspace*{-0.3in}
\begin{center}
\centering
\subfigure[State error vs. time for LQR controller]{\hspace*{-0.2in}\includegraphics[scale=.26]{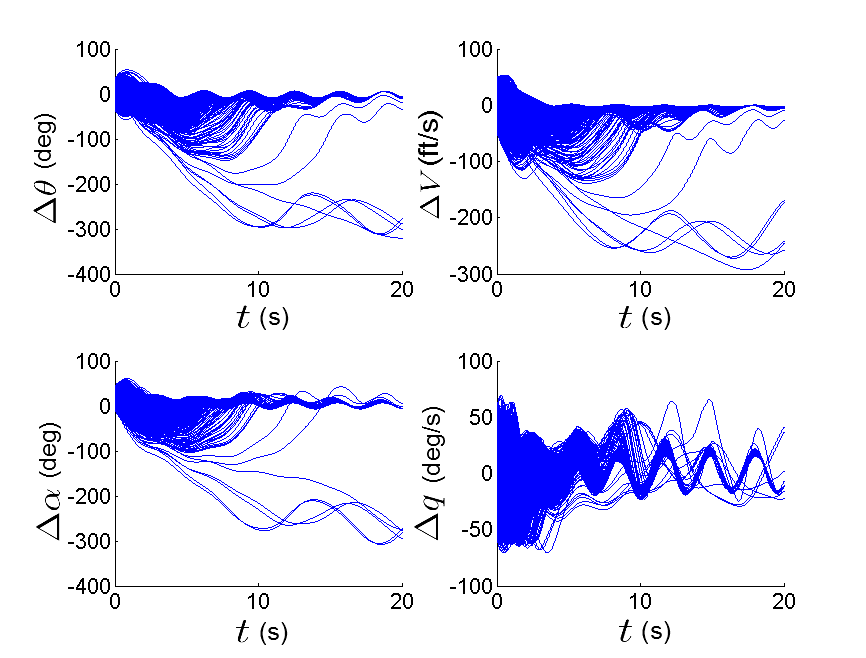}\label{lqr_ActNoise2}}\vspace*{0.15in}
\subfigure[State error vs. time for gsLQR controller]{\hspace*{-0.02in}\includegraphics[scale=.26]{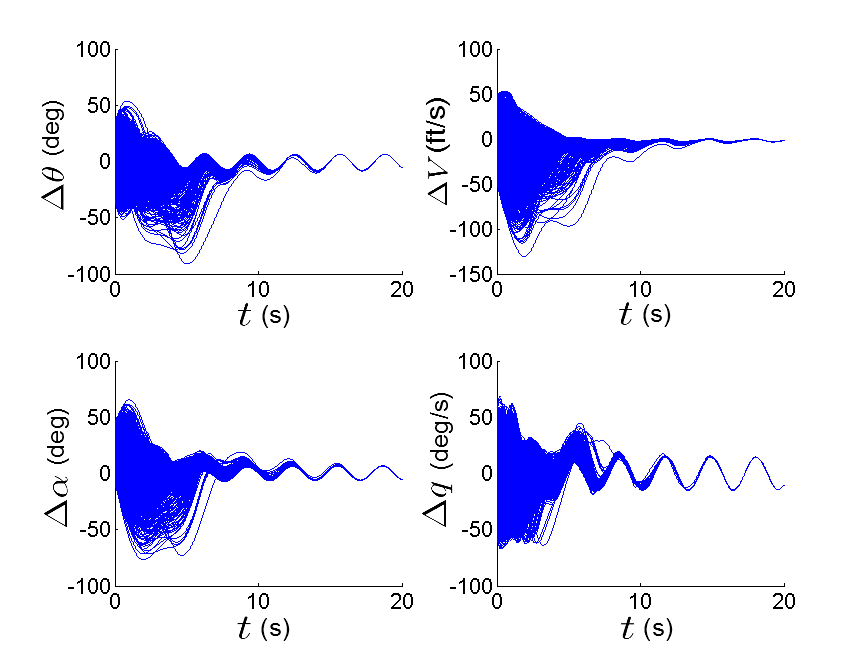}\label{gslqr_ActNoise2}}
\end{center}
\caption{MC state error ($\Delta x^{j}\left(t\right) \triangleq x^{j}\left(t\right) - x^{j}_{\text{trim}}$, $j=1,\hdots,4$) trajectories for LQR and gsLQR closed-loop dynamics, with periodic disturbance $w(t) = 6.5\sin\left(2t\right)$ in the elevator, and initial condition uncertainties.}\label{MonteCarloActuatorNoiseSixPointFiveSineTwot}
\end{figure}

\subsubsection{Optimal transport based quantitative analysis}
We now investigate the effect of elevator disturbance $w(t) = 6.5\sin\left(2t\right)$ and initial condition uncertainties, via the optimal transport framework. In this case, the computation of Wasserstein distance is of the form (\ref{TrivialTransport}).
\begin{figure}[b]
\begin{minipage}[b]{0.5\linewidth}
\centering
\includegraphics[width=3.2in]{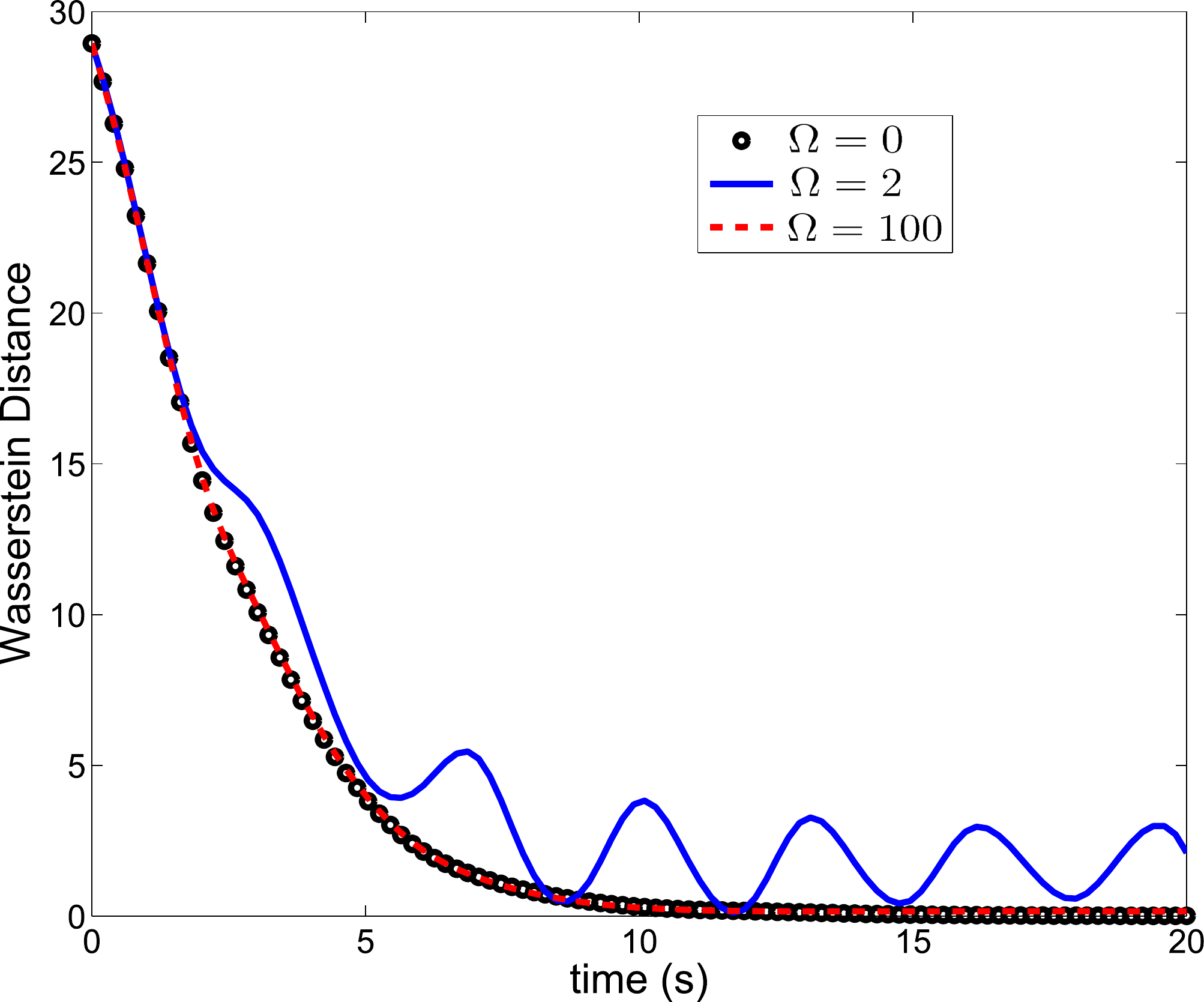}
\caption{Time evolution of Wasserstein distance for LQR, with elevator disturbance $w(t) = 6.5\sin\left(\Omega t\right)$.}
\label{LQRWwithActuatorNoise}
\end{minipage}
\hspace{0.2cm}
\begin{minipage}[b]{0.5\linewidth}
\centering
\includegraphics[width=3.3in]{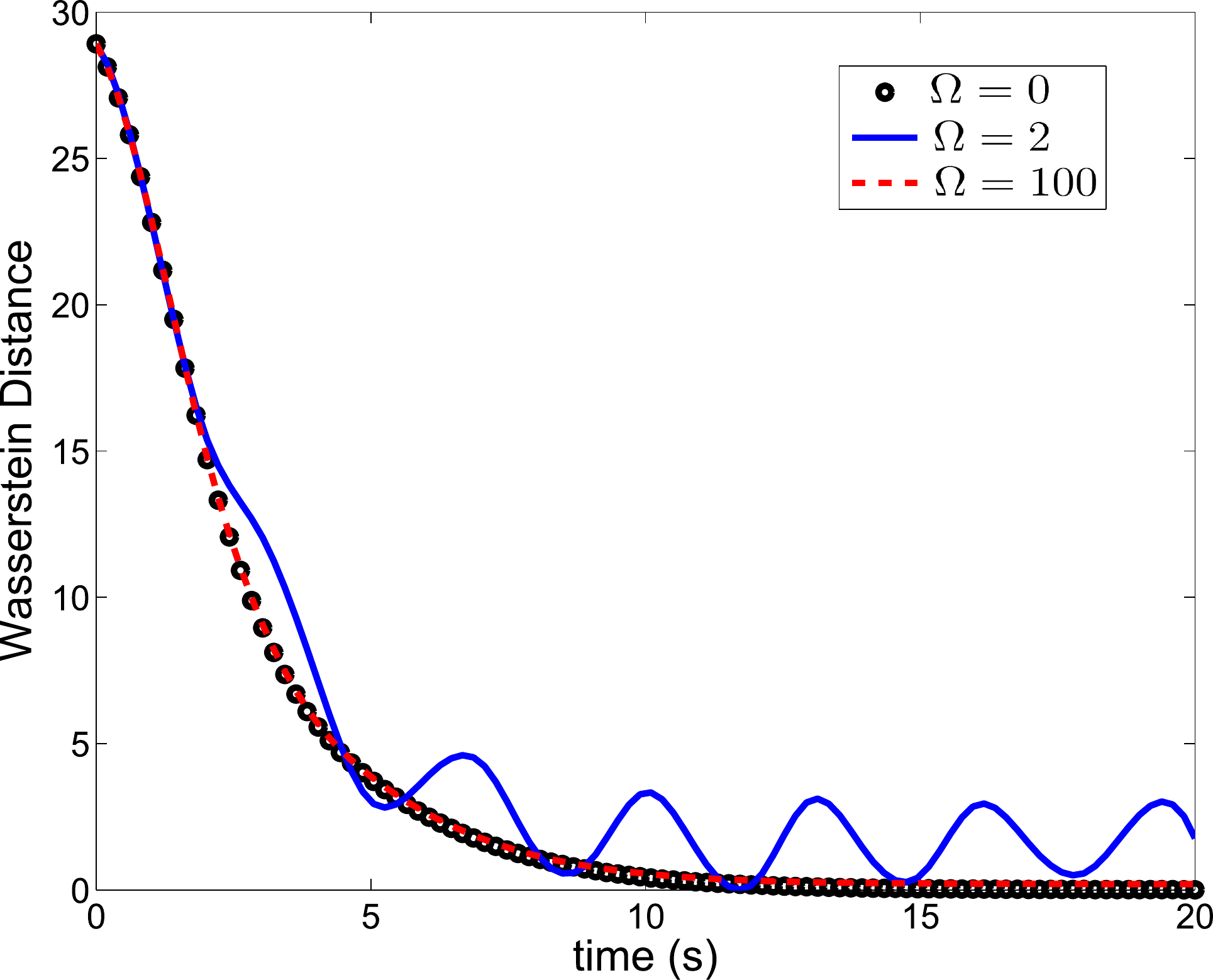}
\caption{Time evolution of Wasserstein distance for gsLQR, with elevator disturbance $w(t) = 6.5\sin\left(\Omega t\right)$.}
\label{gsLQRWwithActuatorNoise}
\end{minipage}
\end{figure}

\begin{figure}[tb]
\begin{center}
\centering
\includegraphics[scale=0.5]{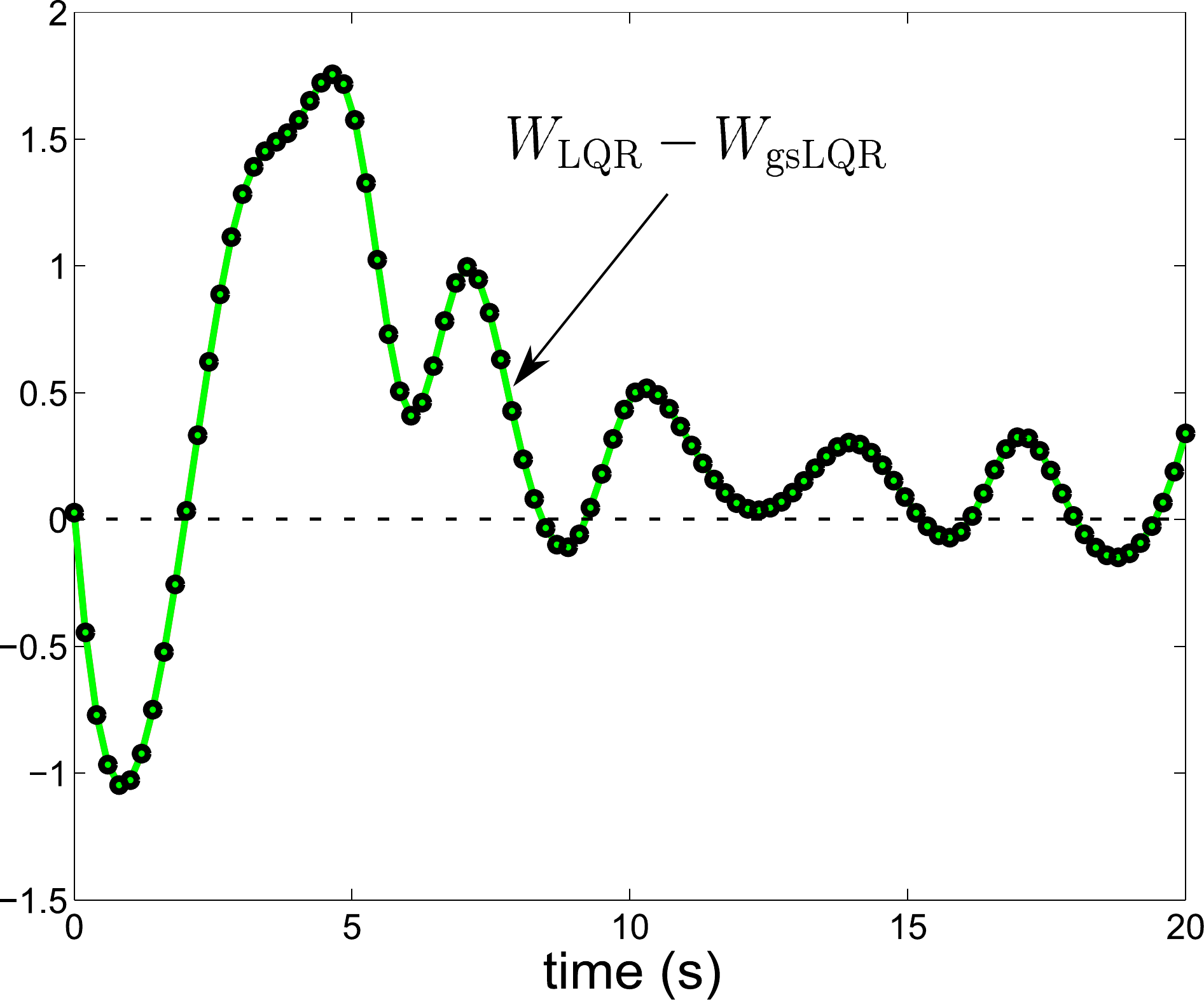}
\end{center}
\vspace*{-0.2in}
\caption{Time history of the difference between $W_{\text{LQR}}$ and $W_{\text{gsLQR}}$, with initial condition uncertainties and elevator disturbance with $\Omega = 2$ rad/s.}
\label{WassDiffActNoise}
\end{figure}

For the LQR closed-loop system, Fig. \ref{LQRWwithActuatorNoise} compares the Wasserstein distances for no actuator disturbance, i.e. $\Omega = 0$ rad/s (\emph{circles}), actuator disturbances with $\Omega = 2$ rad/s (\emph{solid line}) and $\Omega = 100$ rad/s (\emph{dashed line}), respectively. It can be seen that the Wasserstein curves for $\Omega = 0$ rad/s and $\Omega = 100$ rad/s are indistinguishable, meaning the LQR closed-loop \emph{nonlinear} system rejects high frequency elevator disturbance, similar to the \emph{linearized} closed-loop system, as observed in Fig. \ref{SingularValueFreqResponse}. For $\Omega = 2$ rad/s, the Wasserstein curve reflects the effect of closed-loop nonlinearity in joint PDF evolution till approximately $t = 10$ s. For $t > 10$ s, we observe that the LQR Wasserstein curve itself settles to an oscillation with $\omega = 2$ rad/s. This is due to the fact that by $t=10$ s, the joint probability mass comes so close to $x_{\text{trim}}$, that the linearization about $x_{\text{trim}}$ becomes a valid approximation of the closed-loop nonlinear dynamics. This observation is consistent with the MC simulations in Fig. \ref{lqr_ActNoise2}.

For the gsLQR closed-loop system, Fig. \ref{gsLQRWwithActuatorNoise} compares the Wasserstein distances with $\Omega = 0, 2, 100$ rad/s. It is interesting to observe that, similar to the LQR case, gsLQR closed loop system rejects the high frequency elevator disturbance, and hence the Wasserstein curves for $\Omega = 0$ rad/s and $\Omega = 100$ rad/s look almost identical. Further, beyond $t = 10$ s, the gsLQR closed-loop response is similar to the LQR case, and hence the respective Wasserstein curves have similar trends. However, if we compare the LQR and gsLQR Wasserstein curves for $\Omega = 2$ rad/s, then we observe that gsLQR transient performance is slightly more robust than LQR, resulting lower values of Wasserstein distance for approximately $3-5$ seconds. This transient performance difference between LQR and gsLQR, can also be seen in Fig. \ref{WassDiffActNoise} that shows the time evolution of $W_{\text{LQR}} - W_{\text{gsLQR}}$.

\section{Conclusion}
We have introduced a probabilistic framework for controller robustness verification, in the presence of stochastic initial condition and parametric uncertainties. The methodology is demonstrated on F-16 aircraft's closed-loop regulation performance with respect to two controllers: linear quadratic regulator (LQR) and gain-scheduled linear quadratic regulator (gsLQR). Compared to the current state-of-the-art, the distinguishing feature of the proposed method is that the formulation is done at the ensemble level. Hence, the spatio-temporal evolution of the joint PDF values are directly computed in exact arithmetic by solving the Liouville PDE via method-of-characteristics. Next, robustness is measured as the optimal transport theoretic Wasserstein distance between the instantaneous joint PDF and the Dirac PDF at $x_{\text{trim}}$, corresponding to the desired regulation performance. Our numerical results based on optimal transport, show that both LQR and gsLQR achieve asymptotic regulation, but the gsLQR has better transient performance. This holds for initial condition and parametric uncertainties, with or without actuator disturbance. These conclusions conform with the Monte Carlo simulations.

\appendices

\section{}
The purpose of this appendix is to prove that convergence of joint PDFs imply convergence in respective univariate marginals, but the converse is not true. Here, the convergence of PDFs is measured in Wasserstein metric. We first prove the following preparatory lemma that leads to our main result in Theorem 1.
\begin{lemma}
Let $\varphi_{1}^{i}$ and $\varphi_{2}^{i}$ be the respective $i$\textsuperscript{th} univariate marginals for $d$-dimensional joint PDFs $\varphi_{1}$ and $\varphi_{2}$, supported on $\mathbb{R}_{x_{1}} \times \mathbb{R}_{x_{2}} \times \hdots \times \mathbb{R}_{x_{d}}$, and $\mathbb{R}_{y_{1}} \times \mathbb{R}_{y_{2}} \times \hdots \times \mathbb{R}_{y_{d}}$. Let $W_{i} \triangleq W\left(\varphi_{1}^{i},\varphi_{2}^{i}\right)$, $i=1,\hdots,d$, and $\overline{W} \triangleq W\left(\varphi_{1},\varphi_{2}\right)$; then
\begin{eqnarray}
\displaystyle\sum_{i=1}^{d}W_{i}^{2} \:\leqslant\: \overline{W}^{2}.
\label{MarginalWbound}
\end{eqnarray}
\label{BoundMarginalWassSOS}
\end{lemma}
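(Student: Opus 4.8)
The plan is to exploit the additive structure of the squared Euclidean cost together with the optimality of the \emph{joint} transport plan. First I would invoke the standard fact (for measures in $\mathcal{P}_{2}$ on a Euclidean space, see p. 207--208 of \cite{Villani2003}) that the infimum in the definition (\ref{WassDef}) of $\overline{W}^{2}$ is attained, i.e. there exists an optimal coupling $\mu^{\star} \in \mathcal{M}\left(\varphi_{1},\varphi_{2}\right)$ with
\begin{eqnarray*}
\overline{W}^{2} = \displaystyle\int_{\mathbb{R}^{d} \times \mathbb{R}^{d}} \parallel y - \widehat{y} \parallel_{\ell_{2}}^{2} \: d\mu^{\star}\left(y,\widehat{y}\right).
\end{eqnarray*}
Since $\parallel y - \widehat{y} \parallel_{\ell_{2}}^{2} = \sum_{i=1}^{d}\left(y_{i} - \widehat{y}_{i}\right)^{2}$, interchanging this finite sum with the integral gives $\overline{W}^{2} = \sum_{i=1}^{d} \int \left(y_{i} - \widehat{y}_{i}\right)^{2} d\mu^{\star}$.

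The key step is then to identify each term $\int \left(y_{i} - \widehat{y}_{i}\right)^{2} d\mu^{\star}$ with the transport cost of a \emph{feasible} (not necessarily optimal) coupling of the univariate marginals $\varphi_{1}^{i}$ and $\varphi_{2}^{i}$. Let $\pi_{i}$ denote the projection $\left(y,\widehat{y}\right) \mapsto \left(y_{i},\widehat{y}_{i}\right)$ and set $\mu_{i}^{\star} := \left(\pi_{i}\right)_{\#}\mu^{\star}$, the pushforward of $\mu^{\star}$ onto the $i$\textsuperscript{th} pair of coordinates. Because $\mu^{\star}$ has first marginal $\varphi_{1}$ and second marginal $\varphi_{2}$, marginalizing out all coordinates except the $i$\textsuperscript{th} shows $\mu_{i}^{\star}$ has first marginal $\varphi_{1}^{i}$ and second marginal $\varphi_{2}^{i}$, so $\mu_{i}^{\star} \in \mathcal{M}\left(\varphi_{1}^{i},\varphi_{2}^{i}\right)$. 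By change of variables, $\int \left(y_{i} - \widehat{y}_{i}\right)^{2} d\mu^{\star} = \int_{\mathbb{R}\times\mathbb{R}} \parallel a - b \parallel_{\ell_{2}}^{2} d\mu_{i}^{\star}\left(a,b\right) \geqslant W_{i}^{2}$, the last inequality being immediate from the definition of $W_{i}$ as an infimum over $\mathcal{M}\left(\varphi_{1}^{i},\varphi_{2}^{i}\right)$. Summing over $i=1,\hdots,d$ yields $\overline{W}^{2} \geqslant \sum_{i=1}^{d} W_{i}^{2}$, which is (\ref{MarginalWbound}).

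The only point requiring care --- and the main (mild) obstacle --- is the marginal bookkeeping in the second step: one must verify carefully that projecting the optimal joint plan onto a single coordinate pair genuinely returns a coupling of the corresponding univariate marginals, which is a routine consequence of Fubini's theorem once set up correctly. This is also precisely where the asymmetry flagged in the Remark comes from: the projected plan $\mu_{i}^{\star}$ need not be the optimal one-dimensional coupling, so the inequality (\ref{MarginalWbound}) can be strict, and conversely a family of optimal univariate couplings cannot in general be glued into a single optimal joint coupling --- hence convergence of all $W_{i}$ to zero need not force $\overline{W} \to 0$. In the companion direction, $\overline{W} \to 0$ does imply every $W_{i} \to 0$ by (\ref{MarginalWbound}), which is the content of Theorem 1.
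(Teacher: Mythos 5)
Your proposal is correct and follows essentially the same route as the paper's proof: both take the optimal joint coupling, split the squared Euclidean cost coordinatewise, observe that the $(i,i)$\textsuperscript{th} bivariate marginal (your pushforward $\mu_i^{\star}$) is a feasible but not necessarily optimal coupling of $\varphi_1^i$ and $\varphi_2^i$, and sum the resulting inequalities. Your explicit remarks on attainment of the infimum and on why the inequality can be strict are welcome additions but do not change the argument.
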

\begin{proof}
Notice that $\text{supp}\left(\varphi_{1}^{i}\right) = \mathbb{R}_{x_{i}}$, and $\text{supp}\left(\varphi_{2}^{i}\right) = \mathbb{R}_{y_{i}}$, $\forall\,i=1,\hdots,d$. For $d$-dimensional vectors $x=\left(x_{1},\hdots,x_{d}\right)^{\top}$, $y=\left(y_{1},\hdots,y_{d}\right)^{\top}$, by definition
\begin{eqnarray}
\overline{W}^{2} &=& \underset{\xi \in \mathcal{M}\left(\varphi_{1},\varphi_{2}\right)}{\text{inf}} \displaystyle\int_{\mathbb{R}^{2d}}\parallel x - y \parallel_{2}^{2} \: \xi\left(x,y\right) \: dx \: dy
= \displaystyle\int_{\mathbb{R}^{2d}}\parallel x - y \parallel_{2}^{2} \: \xi^{\star}\left(x,y\right) \: dx \: dy,
\end{eqnarray}
where $\xi^{\star}\left(x,y\right)$ is the optimal transport PDF supported on $\mathbb{R}^{2d}$. Clearly,
\begin{eqnarray}
\varphi_{1}^{i} &=& \displaystyle\int_{\mathbb{R}^{2d-1}} \xi^{\star}\left(x,y\right) dx_{1} \hdots dx_{i-1} dx_{i+1} \hdots dx_{d} dy_{1} \hdots dy_{d}, \\
\varphi_{2}^{i} &=& \displaystyle\int_{\mathbb{R}^{2d-1}} \xi^{\star}\left(x,y\right) dx_{1} \hdots dx_{d} dy_{1} \hdots dy_{i-1} dy_{i+1} \hdots dy_{d}.
\end{eqnarray}
Thus, we have
\begin{eqnarray}
W_{i}^{2} &=& \underset{\eta \in \mathcal{M}\left(\varphi_{1}^{i},\varphi_{2}^{i}\right)}{\text{inf}} \displaystyle\int_{\mathbb{R}^{2}} \left(x_{i} - y_{i}\right)^{2} \: \eta\left(x_{i},y_{i}\right) \: dx_{i} \: dy_{i}, \nonumber\\
&=& \displaystyle\int_{\mathbb{R}^{2}} \left(x_{i} - y_{i}\right)^{2} \: \eta^{\star}\left(x_{i},y_{i}\right) \: dx_{i} \: dy_{i}, \nonumber\\
&\leqslant& \displaystyle\int_{\mathbb{R}^{2}} \left(x_{i} - y_{i}\right)^{2} \: \widetilde{\xi}^{\star}\left(x_{i},y_{i}\right) \: dx_{i} \: dy_{i},
\label{IntmInequality}
\end{eqnarray}
where $\widetilde{\xi}^{\star}\left(x_{i},y_{i}\right)$ is the $\left(i,i\right)$\textsuperscript{th} bivariate marginal of $\xi^{\star}\left(x,y\right)$. Since $\displaystyle\sum_{i=1}^{d} \left(x_{i} - y_{i}\right)^{2} = \parallel x - y \parallel_{2}^{2}$, the result follows from (\ref{IntmInequality}), after substituting
\begin{eqnarray}
\widetilde{\xi}^{\star}\left(x_{i},y_{i}\right) = \displaystyle\int_{\mathbb{R}^{2d-2}} \xi^{\star}\left(x,y\right) dx_{1} \hdots dx_{i-1} dx_{i+1} \hdots dx_{d}\:dy_{1} dy_{i-1} dy_{i+1} \hdots dy_{d}.
\end{eqnarray}
This completes the proof.
\end{proof}
\begin{theorem}
Convergence of Joint PDFs in Wasserstein metric, implies convergence of univariate marginals. Converse is not true.
\end{theorem}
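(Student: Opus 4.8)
The plan has two parts, mirroring the two claims. The forward implication is an immediate corollary of Lemma~\ref{BoundMarginalWassSOS}; the converse is refuted by an explicit counterexample.

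\emph{Forward direction.} Suppose $\{\varphi_{1}^{(n)}\}_{n \in \mathbb{N}}$ is a sequence of $d$-dimensional joint PDFs with $\overline{W}^{(n)} \triangleq W\left(\varphi_{1}^{(n)}, \varphi_{2}\right) \to 0$. Writing $\varphi_{1}^{(n),i}$, $\varphi_{2}^{i}$ for the $i$\textsuperscript{th} univariate marginals and $W_{i}^{(n)} \triangleq W\left(\varphi_{1}^{(n),i}, \varphi_{2}^{i}\right)$, I would apply Lemma~\ref{BoundMarginalWassSOS} to the pair $\left(\varphi_{1}^{(n)}, \varphi_{2}\right)$ to obtain $0 \leqslant \left(W_{i}^{(n)}\right)^{2} \leqslant \sum_{j=1}^{d}\left(W_{j}^{(n)}\right)^{2} \leqslant \left(\overline{W}^{(n)}\right)^{2}$ for every $i$. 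Sending $n \to \infty$ and invoking the squeeze theorem gives $W_{i}^{(n)} \to 0$ for each $i = 1, \dots, d$, which is exactly convergence of the marginals in $W$. No further computation is required.

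\emph{Converse is false.} Here I would construct two \emph{distinct} joint PDFs sharing all univariate marginals and then take a constant sequence. On $[0,1]^{2} \subset \mathbb{R}^{2}$, let $\varphi_{1}$ be the distribution of $(U,U)$ and $\varphi_{2}$ the distribution of $(U, 1-U)$, with $U \sim \mathcal{U}\left([0,1]\right)$. Every univariate marginal of $\varphi_{1}$ and of $\varphi_{2}$ equals $\mathcal{U}\left([0,1]\right)$, using that $1-U \sim \mathcal{U}\left([0,1]\right)$ when $U \sim \mathcal{U}\left([0,1]\right)$. Taking $\varphi_{1}^{(n)} \equiv \varphi_{1}$, one has $W_{i}^{(n)} = W\left(\mathcal{U}\left([0,1]\right), \mathcal{U}\left([0,1]\right)\right) = 0$ for $i = 1, 2$ and all $n$, so the marginals converge trivially. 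However $\varphi_{1} \neq \varphi_{2}$ as measures, and since $W$ is a metric on the manifold of PDFs (p.~208, \cite{Villani2003}), $\overline{W}^{(n)} = W\left(\varphi_{1}, \varphi_{2}\right)$ is a fixed strictly positive constant, so $\overline{W}^{(n)} \not\to 0$. This shows convergence of every univariate marginal does not force convergence of the joint, completing the proof.

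\emph{Main obstacle.} There is essentially no substantive analytical difficulty: the forward direction is a one-line consequence of the already-proved lemma, and the converse needs only a well-chosen example. The sole point deserving a sentence of justification is the strict positivity $W\left(\varphi_{1}, \varphi_{2}\right) > 0$, which follows at once from the metric axioms cited above, or, if an elementary argument is preferred, from the fact that the supports of $\varphi_{1}$ and $\varphi_{2}$ --- the two diagonals of the square --- meet only at $\left(\tfrac{1}{2}, \tfrac{1}{2}\right)$, so any coupling must displace a positive amount of mass by a positive distance. If one insists on honest (absolutely continuous) PDFs rather than measures supported on line segments, I would replace $\varphi_{1}, \varphi_{2}$ by bivariate Gaussians with the same marginals but correlation coefficients $+\rho$ and $-\rho$; the argument is unchanged.
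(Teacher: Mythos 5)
Your proposal is correct, and on the forward direction it coincides with the paper's argument: both invoke Lemma~\ref{BoundMarginalWassSOS} to get $\sum_{i=1}^{d}W_{i}^{2}\leqslant \overline{W}^{2}$ and conclude that each marginal distance is squeezed to zero; your sequential phrasing with the squeeze theorem is just a slightly more careful rendering of the same step. On the converse, however, you do something genuinely different and strictly stronger. The paper's proof disposes of the converse with the single observation that $W_{i}=0$ for all $i$ only yields $\overline{W}\geqslant 0$ via (\ref{MarginalWbound}); that shows the lemma's inequality cannot be reversed, but it does not by itself exhibit a situation where the marginals converge while the joints do not, so as written it falls short of refuting the converse. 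Your explicit counterexample --- the comonotone and antimonotone couplings of $\mathcal{U}([0,1])$, or, to stay within absolutely continuous densities, two bivariate Gaussians with identical marginals and correlations $+\rho$ and $-\rho$ --- supplies exactly the missing witness: all univariate marginal Wasserstein distances vanish identically while $W(\varphi_{1},\varphi_{2})>0$ by the metric axioms (or by the elementary mass-displacement argument you sketch). In short, your forward direction buys nothing new, but your converse direction closes a gap that the paper's own proof leaves open.
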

\begin{proof}
Using the notation of Lemma \ref{BoundMarginalWassSOS}, when the joints $\varphi_{1}$ and $\varphi_{2}$ converge, then $\overline{W} = 0$. Hence from (\ref{MarginalWbound}), $\displaystyle\sum_{i=1}^{d} W_{i}^{2} = 0 \Rightarrow W_{i} = 0$, $\forall\:i = 1,\hdots,d$. However, $W_{i} = 0 \Rightarrow \overline{W} \geqslant 0$. Hence the result.
\end{proof}
\begin{remark}
In our context of comparing LQR and gsLQR robustness, (\ref{MarginalWbound}) yields
\begin{eqnarray}
\left(\sum_{j=1}^{n_{x}=4} W_{j}^{2}\left(\varphi_{\text{LQR}}^{j},\varphi_{\text{gsLQR}}^{j}\right)\right)^{1/2} \leqslant W\left(\varphi_{\text{LQR}},\varphi_{\text{gsLQR}}\right) \leqslant W\left(\varphi_{\text{LQR}},\varphi^{*}\right) + W\left(\varphi_{\text{gsLQR}},\varphi^{*}\right),
\label{LQRgsLQRconvergence}
\end{eqnarray}
where the last step is due to triangle inequality. From Fig. \ref{wdist}, we observe that with time, both $W\left(\varphi_{\text{LQR}},\varphi^{*}\right)$ and $W\left(\varphi_{\text{gsLQR}},\varphi^{*}\right)$ converge to zero. As a consequence, $W_{j}\left(\varphi_{\text{LQR}}^{j},\varphi_{\text{gsLQR}}^{j}\right) = 0$ (from (\ref{LQRgsLQRconvergence})), $j=1,\hdots,4$, as evidenced by Fig. \ref{1dmarg}.

\end{remark}

\ifCLASSOPTIONcaptionsoff
  \newpage
\fi

\end{document}